\documentclass[opre]{informs3nothing}  

\DoubleSpacedXI 




\usepackage{tikz}
\usepackage{gensymb}


\usepackage{natbib}
 \bibpunct[, ]{(}{)}{,}{a}{}{,}%
 %
 %
 %
 %
 %

\graphicspath{{./figures/}}
\newcommand{\vx}{\mathbf{x}}
 \newcommand{\vz}{\mathbf{z}}
 \renewcommand{\Pr}{\mathbb{P}}
 
 \DeclareMathOperator*{\maximize}{maximize}
 \DeclareMathOperator*{\minimize}{minimize}

\TheoremsNumberedThrough     
\ECRepeatTheorems

\EquationsNumberedThrough    








\TITLE{Building a Location-Based Set of Social Media Users}
\RUNTITLE{Building a Location-Based Set of Social Media Users}

\RUNAUTHOR{Marks and Zaman}
\ARTICLEAUTHORS{%
\AUTHOR{Christopher Marks}
\AFF{%
Operations Research Center, Massachusetts Institute of Technology, 
Charles Stark Draper Laboratory, 
555 Technology Square, 
Cambridge, MA~  02139, 
\EMAIL{cemarks@mit.edu}
}
\AUTHOR{Tauhid Zaman}
\AFF{%
Department of Operations Management, Sloan School of Management, Massachusetts
Institute of Technology,  
77 Massachusetts Ave., 
Cambridge, MA~ 02139,  
\EMAIL{zlisto@mit.edu}
}
}

\ABSTRACT{In many instances one may want to gain situational awareness in an environment
	by monitoring the content of local social media users.  Often the challenge is how
	to build a set of users from a target location.  Here we introduce a method for building such a set of users by using an \emph{expand-classify} approach which begins with a small set of seed users from the target location
and then iteratively collects their neighbors and then classifies their  locations.
We perform this classification using maximum likelihood estimation on a factor graph model which incorporates features of the user profile and also
social network connections.  We show that maximum likelihood estimation
reduces to solving a minimum cut problem on an appropriately defined graph.  
We are able to obtain several thousand users within a few hours for many diverse locations using our approach.   
Using geo-located data, we find that our approach typically achieves good accuracy for population centers with less than 500,000 inhabitants, 
while for larger cities performance degrades somewhat.  We also find that our approach is able to collect many more users with higher accuracy 
than existing search methods. Finally, we show that by studying the content of location specific users obtained with our approach, we can identify the onset of significant social unrest in locations such as the Philippines. 
}

\SUBJECTCLASS{  
    Social networks, social media, geo-location, situational awareness, factor graph, 
}



\begin{document}

\maketitle

%


\section{Introduction}
    In many situations one wants to obtain situational awareness in an environment.   Consider a situation in which a     
    dangerous disease is reported in a remote location (e.g., see \cite{usnews-casimiro}).  For a multinational, nongovernmental, or governmental organization, monitoring such an outbreak could be difficult and costly.  If an agency had lists of all of social media users in the potentially affected areas, observing the content people are posting could be useful in ascertaining whether the disease is spreading.  Similar to this situations is the emergency response to disaster events or events related to political or social unrest.    If authorities knew which social media users were located in the affected area, they could monitor their social media content before, during, and after the disaster event.  This could serve as a powerful aid in gaining situational awareness, assessing the event's impact, and coordinating an effective response \citep{yin2012using,gao2011harnessing,sutton2008backchannels,yates2011emergency,merchant2011integrating}.
    
    Monitoring geo-located social media users has applications beyond these types of emergency response situations.  Consider someone looking to open a local business.  In order to learn about her prospective customer base and the local business climate, she might conduct Internet searches, read local media publications, and even lookup public data sources such as tax and census data.  What if, in addition to these sources of information, she had a set of all of the Twitter users in the local area, and therefore had access to all of their publicly available tweets?  Then she could mine their content to identify potential customers who she could contact.
    
    In addition to business applications, this method might be useful to political campaigns, which often conduct polls and surveys  to better understand what issues are important to people in certain local areas.  Observing the social media content of all of the users in the targeted locations could provide a deeper understanding of local political sentiment which may not be evident in national polls or surveys.  This could enhance micro-targeting campaign efforts.

\subsection{Expand-Classify Approach}
All of these different applications show the importance of being able to obtain a set of location specific social media users.  However, today it is not very easy to obtain such a set.  If one wanted to collect users in a location, there are two approaches that are available.  One is to search a social network for any content which is geo-tagged in the location of interest.  The second is to search the social network for any content or user profiles with keywords of interest, such as the name of the location.  Relying upon geo-tagged content will produce limited results because very often content does not have such information.  For instance, in the social media site Twitter it has been found that less than 1\% of the content is geo-tagged \citep{hecht2011tweets}.  Keyword based search may face a similar problem as many times users will not directly refer to their location in their content or profile.  Therefore, there are potentially many users in a location that would not be found by either of these methods.  

To overcome the existing limitations, we develop an {\it expand-classify} approach to iteratively collect a set of location specific social media users, with an emphasis on small or medium-sized population centers.  Each iteration of this approach begins with a set of social media users who have been classified as either being in the location of interest, or not.  In the {\it expand} step, a subset of the users that have been classified as being in the location of interest is selected for expansion queries.  Some or all of the social network neighbors of these users are collected and added to the data set.  This collection not only discovers new users to add to the dataset, but also identifies previously unknown connections between users that are already in the user data set.  

In the {\it classify} step, the location classification of each user is updated using a probability model.  Similar to \cite{compton2014geotagging}, we perform a global optimization to classify each user based on both the user's account information and the classification of the user's neighbors.  To do this, we construct a simple but powerful factor graph model of the social network and find the maximum likelihood location classification according to this model. 

Our approach relies to two very general assumptions.
\begin{enumerate}
	\item We assume that social media users in the same location tend to connect with each other at a higher rate than users in different locations.  We refer to this phenomenon as location \emph{homophily}.  This assumption is supported by the findings of \cite{backstrom2010find}.
	
	\item We assume that certain user profile characteristics, such as the use of location-specific words or phrases, can serve as useful features in classifying a user's location.  This assumption is supported by the location classification models of \cite{han2013stacking} and others.
\end{enumerate}
The first assumption is embedded in our collection methodology.  By collecting the neighbors of users that are identified as being in the target location during the \emph{expand} step, new users in the target location will continue to be identified.  Our classification model makes use of both of these assumptions.

\subsection{Our Contributions}
	In this paper we show how one can efficiently build a set of location specific social media users using an expand-classify approach.  A key element of this approach is our factor graph model for location classification.  We show that classifying users with this model via maximum likelihood reduces to solving a minimum cut problem on an appropriately defined graph.  This makes the classification step very efficient.  Overall, our approach is very simple and requires a minimal amount of user input.
    
    We also empirically demonstrate the efficacy of our approach using data from the social media site Twitter.  
    One of the difficulties inherent in determining social media user locations is the problem of obtaining a labeled dataset.  In order to evaluate the performance of our approach, we use geo-tagged Twitter posts, known as tweets, as a ground-truth.  We find that our method has high accuracy in identifying users from a given location, but that this accuracy decreases for larger population centers.  In addition, we show that by studying the content of location specific users obtained with our approach, we can gain situational awareness in an environment.  In particular, we show we can use our method to identify the onset of significant social unrest in the Philippines.  


\subsection{Previous Work}
Related to the problem of building a set of social media users in a given location is identifying the location
of a given user.  While the former problem has not been addressed in the literature, the latter problem  has been a topic of interest in recent social media research.  Geo-location
of social media users has many real-world applications, including those in emergency response, marketing, law enforcement, military intelligence, and anti-terrorism.  

    \cite{bo2012geolocation} and \cite{han2013stacking} approach the user location problem using user content to identify location.   These methods have been shown to correctly classify 50\%--60\% of user locations on test data.  A downside to using these methods is that they require collection and parsing of each user's posted content, which can become computationally expensive.     
    While many of the user location classification methods in the literature cite online advertising and customized user experience as their motivation for learning user location, some efforts in this area of research have their roots in emergency response, crises, and situational awareness.  \cite{starbird2012learning} uses collaborative filtering and support vector machines to identify Twitter users that are physically present at mass disruption events, such as the Occupy Wall Street protests in New York City in 2011.  \cite{kumar2013whom}, working in a similar vein, introduce a method of identifying users that are providing useful information for gaining situational awareness on the Arab Spring movements in the same year.  This approach combines topic models with user location information to determine user relevance.  

    Other approaches to the user location problem use social network connections.     \cite{Davis-InferringLocation} give a method for inferring Twitter user location that uses declared profile locations, tweet geo-locations, and the locations of each user's friends.  
    \cite{compton2014geotagging} take a similar approach to \cite{davis2012unsupervised}, and use geo-tagged content to evaluate their method.    Many others have adopted similar approaches to identifying user locations from a social media data set.  \cite{jurgens2013s} uses a label propagation algorithm to assign locations to users in a social network based on a few known user locations. \cite{kong2014spot},  propose methods that assign weights to user relationships that quantify their utility in discerning location.  \cite{mcgee2013location} also introduce a model that uses weighted social media relationships to determine user locations, based on a model for predicting online relationship strength proposed by \cite{gilbert2009predicting}. \cite{li2012multiple} give a method for assigning locations based on user behavior likelihood models, and \cite{rout2013s} show that support vector machines can also be used to classify users' locations.
    
     \cite{backstrom2010find} provides a detailed analysis of how distance correlates to online relationships in social media sites.  One of the important findings in this work is that relationships tend to be less geographically localized in more dense population centers.  We make the same observation, and our ability to build sets of users from large metropolises suffers as a result of this characteristic.

    There have also been many efforts in community detection within social media.  Community detection can be thought of as a generalization of the user location problem, and approaches to the two problems often rely on similar assumptions.  \cite{NIPS2012-4532} present an approach that considers both group interconnectedness and similarity in features.  Another well-known approach to community-detection are mixed-membership stochastic blockmodels, introduced by \cite{airoldi2008mixed}, which rely solely on network structure to assign users into community groups.  Both of these community detection methods are unsupervised.  This is in contrast to most location classification methods, which tend to rely on having a set of labeled data.


 \section{Current Approaches }

To build a set of location-based social media users today, one must rely on search functions of the social networks.  In this work we focus on the social network Twitter, so here we will
investigate some of the challenges and limitations of Twitter's search functions.  The main limitation we find is that 
the search functions either return a very limited set of users or a set of users not in the location of interest.    We now present details of our investigation.

    \subsubsection{Twitter User Search}

    The Twitter user search API enables a person to search for Twitter users based on a query string \citep{rest-usersearch}.  The API returns user profiles that contain a match or partial match of the query string in the profile information.  We used this method to produce the seed sets of users for all of our collections. This method can return up to 1000 profiles for a specific search query.

    In several cases, this API did not return any results for specific location queries.  We tried different queries such as ``Corinto, Colombia'', ``Binghamton, NY'', and ``Caracas, Venezuela''. In these cases we used the city or town name only in the query to produce results, with more unique town names producing better seed sets.  The seed set from the ``Casimiro de Abreu'' query included only two accounts that appeared to be in or related to the target location.     Even when the specific location query returned results, the profiles obtained were not necessarily in the target location.  Among the results from the ``Binghamton, NY'' query, for example, was a user from Virginia whose only apparent connection to Binghamton was a claim in his profile description that he had once met his favorite celebrity there.      Using the user search for Caracas returned close to the maximum of 1000 users.  We found geo-located tweets for 102 of these users.  Of these, only 27 were inside of a 15 mile radius around Caracas (Figure \ref{fig: Caracas}), while the remaining 75 were scattered around the world.

    \subsubsection{Twitter Search}

    The Twitter search API, different from the Twitter user search API, returns tweets that contain a match or partial match for a query string.  This API also takes an optional location and radius, and returns only matching tweets originating from inside the radius.  According to the documentation, the API uses tweet geo-tagging if available, but otherwise will use profile location information \cite{rest-search}.    One of the primary drawbacks of attempting to use this method to build a set of users is the need to supply a query string, as this API searches for tweets and not users.  Also, the Twitter search API limits its results to tweets from the previous week, so less active users would not be found using this method.

    We attempted to employ this method by executing a Twitter search query for the single character \emph{e}, and supplying the latitude--longitude location and radius that was used for labeling geo-tagged tweets in each collection.      The users that posted the tweets returned by these queries did not generally appear to be in the corresponding target locations.  For example, executing this query using the location of the greater Binghamton area returned tweets from 66 users, only six of whom indicated they were in the greater Binghamton area.     Many of the remaining 60 user profiles returned by the search query indicated user locations that were clearly not in the target area.      Searching through the 66 user timelines for geo-located tweets only yielded two locations: one inside the Binghamton area and one outside.  


\section{Classification Model}
The limitations of existing approaches for building a set social media users from a given location motivate our expand-classify approach which we will show can quickly
produce a large number of location-specific users.  Key to this approach is the location classification of all users collected in the data set.  
In this section, we present a classification method that uses a factor-graph model based on the image segmentation model presented by \cite{zabih2004spatially} and is closely related to the Ising energy model.  We make a set of minimal assumptions on this model which allow for the classification to be efficiently done via minimum graph-cuts.  We now provide details of the model.

    \subsection{Factor Graph Representation}

        We use a factor graph to serve as a generative model of user locations and connections within a social network.  Nodes in a factor graph represent variables, which can be latent or observed.  Nodes are connected to factors, which imply a dependency structure that specifies a factorization of the joint distribution function of  variables associated with the nodes.

        \subsubsection{Nodes.}  In our graph we consider three types of nodes, representing the three types of variable in our model: 
        \begin{enumerate}
            \item User profile information $\vx_{i}$ for each user $i$. This vector includes information on whether a user's profile information contains location-specific terms.  These values are observed.
            \item Relationship features $\vz_{i,j}$ for pairs of users $i,j$.  This vector encodes the nature of the social media relationship (who is following whom), the out degree of the ``follower,'' and the in-degree of the ``friend.''  These values are also observed.  
            \item User location class $\ell_{i}$ for each user $i$.  These values are unobserved, or latent.
        \end{enumerate}

        \subsubsection{Factors.}
        The factor graph also contains two types of factors:
        \begin{enumerate}
            \item For each user $i$, corresponding nodes $\vx_{i}$ and $\ell_{i}$ share a common factor with potential 
            \[
            f(\vx_{i},\ell_{i})=e^{-\phi(\vx_{i},\ell_{i})}.
            \]  
            \item For each pair of users $i,j$, the corresponding nodes $\ell_{i}$, $\ell_{j}$, and $\vz_{i,j}$ share a factor with potential 
            \[
            g(\vz_{i,j},\ell_{i},\ell_{j})=e^{-\psi(\vz_{i,j},\ell_{i},\ell_{j})}.
            \]  

        \end{enumerate}We refer to $\phi$ and $\psi$ respectively as the \emph{profile energy} and  \emph{link energy} functions.
For pairs of users $i,j$ for which there is no observed social media relationship (encoded in vector $\vz_{i,j}$), we fix $g(\vz_{i,j},\ell_{i},\ell_{j})=1$.  This modeling choice and the assumptions implied by it are discussed in Section \ref{sec: assumptions} below.
        \begin{figure}[!hbt]
	\centering
	\includegraphics[scale=0.5]{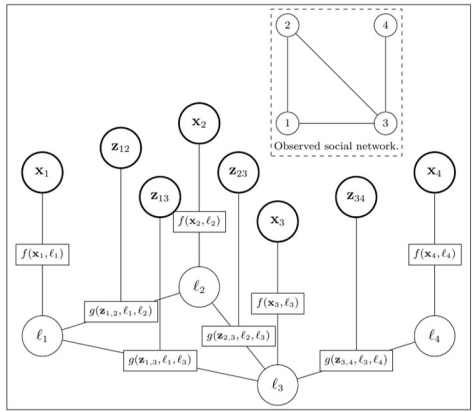}
	            \caption{Factor graph model for social media user location classes.}\label{fig: factor-graph}
\end{figure}

        Figure \ref{fig: factor-graph} provides a notional social network consisting of four users and the corresponding factor graph model.  This figure omits trivial factors with fixed potential functions, i.e., factors corresponding to pairs of users that are not connected to each other in the social network.  The nodes with heavier outlines represent observed values, while the $\ell_{i}$ (location) nodes represent latent variables.  

        Given $N$ observed users in a social network, let $\mathbf{X}$ be all of the users' observed profile features, $\mathbf{Z}$ be all of the observed relationship features, and $\mathbf{L}$ be a vector of latent user location classes.  Our factor graph model implies that the joint probability of these vectors is proportional to the product of the potentials, i.e.,
        \begin{equation}
        \Pr(\mathbf{X},\mathbf{Z},\mathbf{L}) = \frac{\prod_{i=1}^{N}e^{-\phi(\vx_{i},\ell_{i})}\prod_{\{i,j: \; i < j\}}e^{-\psi(\vz_{i,j},\ell_{i},\ell_{j})}}{\mathbf{Z}(\boldsymbol{\Phi,\Psi})}, \label{eq: joint_dist}
        \end{equation}
        where $\mathbf{Z}(\boldsymbol{\Phi,\Psi})$ is the partition function.

    \subsection{Model Characteristics} \label{sec: assumptions}

        We have assumed that user profile characteristics and relationships can be modeled by a probability distribution that factorizes according to the structure of our factor graph representation.  We now provide some additional specifications that support our objective of obtaining a set of users in a specified location and follow from our assumptions.

        \subsubsection{Location Classes.} We assume a two-class location model, in which we only wish to label each user in the dataset as either being in the location of interest or not in the location of interest.  We set the location class variable   $\ell_{i}$ to one if user $i$ is in the location of interest and zero otherwise.
        
        \subsubsection{Link Energy Function.} \label{sec: link energy assmptions} Without yet specifying a link energy function, we impose the following limitations on its structure:
        \begin{enumerate}
            \item We assume that the probability of an edge forming between two users in a social network is small, irrespective of whether or not they are in the same location.  To demonstrate the implications of this assumption, let $A_{ij}$ be the event that a relationship exists between user $i$ and user $j$ on a social network.  Our assumption implies that
            $\Pr(A^c_{ij}|\ell_{i} \neq \ell_{j})\approx            \Pr(A^c_{ij}|\ell_{i}=\ell_{j}) \approx 1$.  In other words, the location labels of do not have a strong effect on the probability of the non-existence of a social network connection.
            Based on this assumption, we set $\psi(\vz_{ij},\ell_{i},\ell_{j})=0$ and $g(\vz_{ij},\ell_{i},\ell_{j})=1$ when there is no observed connection between users $i$ and  $j$.  
            
            \label{item: no relationship assumption}
            \item \label{item: inequalities assumption} We assume location homophily, which means that social network links between two users that are in different location classes are always less probable than links between users in the same location class.  This implies that for any users $i$ and $j$, the following inequalities hold:
            \begin{align*}
                \psi(\vz_{ij},1,1) \leq \psi(\vz_{ij},0,0) &\leq \psi(\vz_{ij},0,1)\\
                \psi(\vz_{ij},1,1) \leq \psi(\vz_{ij},0,0) &\leq \psi(\vz_{ij},1,0).
            \end{align*}
			  We allow for a positive cost to be associated with classifying two connected users as both being outside of the location of interest, as this classification does not necessarily imply that they are in the same location.  However, we restrict this cost to be no more than the cost of assigning different location classes to a pair of connected users.   By convention, we set $\psi(\vz_{ij},1,1) = 0$.  An important implication that follows from this assumption is
            \[
                \psi(\vz,1,1) + \psi(\vz,0,0) \leq \psi(\vz,1,0) + \psi(\vz,0,1).
            \]
            These inequalities allow for efficient location classification using graph cuts, as shown by \cite{kolmogorov2004energy}.
        \end{enumerate}

    \subsection{Classification Optimization} \label{sec: optimization}

        We set as our objective to find the most probable location classifications, $\mathbf{L}$.  More formally, having observed values $\mathbf{X}$ and $\mathbf{Z}$, we seek a solution to the following optimization problem:
        \[
        \maximize_{\mathbf{L}}~\Pr(\mathbf{X},\mathbf{Z},\mathbf{L}),
        \]
        which is equivalent to finding a solution to the following:
        \begin{equation}
            \minimize_{\mathbf{L}}  \sum_{i}\phi(\vx_{i},\ell_{i}) + \sum_{i < j}\psi(\vz_{i,j},\ell_{i},\ell_{j}). \label{eq: energy_optimization}
        \end{equation}
        \cite{boykov2001fast} provide an efficient method for minimizing functions of this nature using graph cuts.  Following their method, and consistent with the subsequent findings by \cite{kolmogorov2004energy}, we construct a directed graph consisting of a source node $s$, a sink node $t$, and a node $u_{i}$ for each user $i$.  We add the following edges and capacities:
        \begin{itemize}
            \item An edge from each user node $u_{i}$ to the sink node $t$ with capacity 
            \[
            c_{(u_{i},t)}=\phi(\vx_{i},1).
            \]
            \item For each pair of users $i,j$ for which there is an observed relationship in the social network, edges from $u_{i}$ to $u_{j}$ and from $u_{j}$ to $u_{i}$ with capacities
            \begin{align*}
            c_{(u_{i},u_{j})} &=\psi(\vz_{i,j},1,0) - \frac{1}{2}\psi(\vz_{i,j},0,0), \\
            c_{(u_{j},u_{i})} &=\psi(\vz_{i,j},0,1) - \frac{1}{2}\psi(\vz_{i,j},0,0).
            \end{align*}
            \item An edge from the source node $s$ to each user node $u_{i}$ with capacity 
            \[
            c_{(s,u_{i})}=\phi(\vx_{i},0)+\frac{1}{2}\sum_{\{j:j\neq i\}}\psi(\vz_{i,j},0,0).
            \]
        \end{itemize} 
        We refer to this graph as the {\bf Energy Graph} representation of the energy function
        \begin{equation}
            E(\mathbf{L})
            = \sum_{i}\phi(\vx_{i},\ell_{i}) + \sum_{i < j}\psi(\vz_{i,j},\ell_{i},\ell_{j}), \label{eq: energy_function}
        \end{equation}
        which is the objective function in the classification optimization,  equation \eqref{eq: energy_optimization}.  An example energy graph for the social network and factor graph in Figure \ref{fig: factor-graph} is shown in Figure \ref{fig: min-cut-graph}.  The following result, which is proved in Appendix \ref{sec:proof_thm},  shows that performing the optimization in equation \eqref{eq: energy_optimization} is equivalent to finding the minimum capacity cut on the Energy Graph, which can be done efficiently using minimum cut-maximum flow algorithms \citep{boykov2001fast}.
        \begin{figure}[!hbt]
            \centering
            \begin{tikzpicture}
                \node[circle,draw]
                    (u1) at (0,0){$u_{1}$};
                \node[circle,draw]
                    (u2) at (4.8,1.5){$u_{2}$};
                \node[circle,draw]
                    (u3) at (7.2,-1.5){$u_{3}$};
                \node[circle,draw]
                    (u4) at (12,0){$u_{4}$};
                \path[draw,-latex] 
                    (u1.north east)
                     -- 
                     node[above,pos=0.6,rotate=17]{\tiny $\psi(\vz_{1,2},1,0)-\frac{1}{2}\psi(\vz_{1,2},0,0)$}
                     (u2.west);
                \path[draw,-latex] 
                    (u2.south west)
                     -- 
                     node[below,pos=0.4,rotate=17]{\tiny $\psi(\vz_{1,2},1,0)-\frac{1}{2}\psi(\vz_{1,2},0,0)$}
                    (u1.east);
                \path[draw,-latex] 
                    (u3.south west) 
                     -- 
                     node[below,pos=0.5,rotate=-11.5]{\tiny $\psi(\vz_{1,3},0,1)-\frac{1}{2}\psi(\vz_{1,3},0,0)$}
                    (u1.south);
                    \path[draw,-latex] 
                    (u1.south east) 
                     -- 
                     node[above,pos=0.45,rotate=-11.5]{\tiny $\psi(\vz_{1,3},1,0)-\frac{1}{2}\psi(\vz_{1,3},0,0)$}
                    (u3.west);
                \path[draw,-latex] 
                    (u3.north west) 
                     -- 
                     node[above,pos=0.5,rotate=-51]{\tiny \makebox[12pt] {~\hspace{6pt}\raisebox{10pt}{$\psi(\vz_{2,3},0,1)$}}$-\frac{1}{2}\psi(\vz_{2,3},0,0)$}
                    (u2.south east);
                \path[draw,-latex] 
                    (u2.south) 
                     -- 
                     node[below=-6pt,pos=0.7,rotate=-52]{\tiny \makebox[0pt]{$\psi(\vz_{2,3},1,0)$}\raisebox{-9pt}{~\hspace{-1.5em}$-\frac{1}{2}\psi(\vz_{2,3},0,0)$}}
                    (u3.west);
                \path[draw,-latex] 
                    (u3.north east) 
                     -- 
                     node[above,pos=0.5,rotate=18]{\tiny $\psi(\vz_{3,4},1,0)-\frac{1}{2}\psi(\vz_{3,4},0,0)$}
                    (u4.north west);
                \path[draw,-latex] 
                    (u4.west) 
                     -- 
                     node[below,pos=0.5,rotate=18]{\tiny $\psi(\vz_{3,4},0,1)-\frac{1}{2}\psi(\vz_{3,4},0,0)$}
                    (u3.east);
                \node[circle,draw,very thick]
                    (s) at (6,4){$s$};
                \path[draw,-latex] 
                    (s.south west) 
                     -- 
                     node[above,pos=0.4,rotate=33]{\tiny $\phi_{\vx_{1},0}+\frac{1}{2}\sum_{j\ne 1}\psi(\vz_{1,j},0,0)$}
                    (u1.north);
                \path[draw,-latex] 
                    (s.south west) 
                     -- 
                     node[above,pos=0.5,rotate=64]{\tiny $\phi_{\vx_{2},0}$}
                    (u2.north);
                \path[draw,-latex] 
                    (s.south west) 
                     -- 
                     node[below,pos=0.5,rotate=64]{\tiny $+\frac{\sum_{j\ne 2}\psi(\vz_{2,j},0,0)}{2}$}
                    (u2.north);
                \path[draw,-latex] 
                    (s.south east) 
                     -- 
                     node[above,pos=0.5,rotate=-78]{\tiny $\phi_{\vx_{3},0}+\frac{1}{2}\sum_{j\ne 3}\psi(\vz_{3,j},0,0)$}
                    (u3.north);
                \path[draw,-latex] 
                    (s.south east) 
                     -- 
                     node[above,pos=0.4,rotate=-33]{\tiny $\phi_{\vx_{4},0}+\frac{1}{2}\sum_{j\ne 4}\psi(\vz_{4,j},0,0)$}
                    (u4.north);
                \node[circle,draw,very thick]
                    (t) at (6,-4){$t$};
                \path[draw,-latex]  
                    (u1.south east) 
                     -- 
                     node[below,pos=0.7,rotate=-33]{\tiny $\phi(\vx_{1},1)$}
                    (t.north west);
                \path[draw,-latex]  
                    (u2.south) 
                     -- 
                     node[below,pos=0.8,rotate=-78]{\tiny $\phi(\vx_{2},1)$}
                    (t.north);
                \path[draw,-latex]  
                    (u3.south) 
                     -- 
                     node[below,pos=0.5,rotate=63]{\tiny $\phi(\vx_{3},1)$}
                    (t.north);
                \path[draw,-latex]  
                    (u4.south west) 
                     -- 
                     node[below,pos=0.7,rotate=33]{\tiny $\phi(\vx_{4},1)$}
                    (t.north east);
            \end{tikzpicture}
            \caption{Energy Graph representation of the energy equation corresponding to the factor graph in Figure \ref{fig: factor-graph}.}\label{fig: min-cut-graph}
        \end{figure}
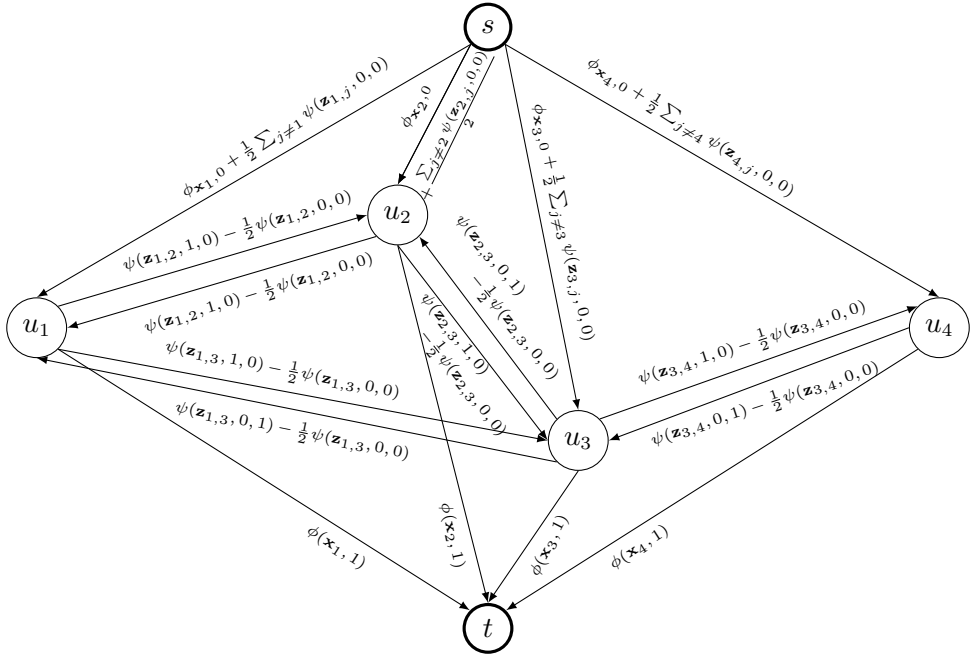

        \begin{theorem}[Minimum Cut-Classification Optimality Equivalence] \label{thm: optimality}
            Given an energy minimization of the form given in \eqref{eq: energy_optimization}, the optimal solution corresponds exactly to a minimum capacity $s$-$t$ cut in the corresponding Energy Graph representation.
        \end{theorem}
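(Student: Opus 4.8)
The plan is to exhibit an explicit, capacity-preserving bijection between binary labelings $\mathbf{L}$ and $s$-$t$ cuts of the Energy Graph, so that the minimum-cut problem and the energy minimization \eqref{eq: energy_optimization} become literally the same problem. I would fix the convention that an $s$-$t$ cut partitions the nodes into a source side $S \ni s$ and a sink side $T \ni t$, and identify each cut with the labeling defined by $\ell_{i} = 1$ when $u_{i} \in S$ and $\ell_{i} = 0$ when $u_{i} \in T$. This identification is a bijection between the $2^{N}$ cuts and the $2^{N}$ labelings, so it suffices to show that for every labeling the capacity of the associated cut equals $E(\mathbf{L})$.

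Next I would compute the cut capacity by summing the capacities of exactly those directed edges that leave $S$ and enter $T$. The terminal edges are straightforward: the edge $(u_{i},t)$ crosses the cut precisely when $u_{i} \in S$ (i.e.\ $\ell_{i} = 1$), contributing $\phi(\vx_{i},1)$, while the edge $(s,u_{i})$ crosses precisely when $u_{i} \in T$ (i.e.\ $\ell_{i} = 0$), contributing $\phi(\vx_{i},0) + \frac{1}{2}\sum_{j \neq i}\psi(\vz_{i,j},0,0)$. Summing over $i$ reproduces the profile term $\sum_{i} \phi(\vx_{i},\ell_{i})$ exactly, and leaves a surplus of $\frac{1}{2}\sum_{i:\,\ell_{i}=0}\sum_{j\ne i}\psi(\vz_{i,j},0,0)$ that will be absorbed by the pairwise edges.

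The heart of the argument is a four-case check, carried out pair by pair, that the pairwise edges together with this surplus reproduce the link term $\sum_{i<j}\psi(\vz_{i,j},\ell_{i},\ell_{j})$. For a connected pair $\{i,j\}$ the edge $(u_{i},u_{j})$ crosses the cut only when $\ell_{i}=1,\ell_{j}=0$ and the edge $(u_{j},u_{i})$ only when $\ell_{i}=0,\ell_{j}=1$, while the two source-edge half-terms $\frac{1}{2}\psi(\vz_{i,j},0,0)$ appear for $i$ and $j$ exactly when $\ell_{i}=0$ and $\ell_{j}=0$ respectively. Tabulating the four configurations $(\ell_{i},\ell_{j})\in\{0,1\}^{2}$, the $\pm\frac{1}{2}\psi(\vz_{i,j},0,0)$ contributions cancel so that each configuration contributes exactly $\psi(\vz_{i,j},\ell_{i},\ell_{j})$ (recalling the convention $\psi(\vz_{i,j},1,1)=0$): the $(1,1)$ cut cost is $0$, the $(0,0)$ cost is $\frac{1}{2}\psi(\vz_{i,j},0,0)+\frac{1}{2}\psi(\vz_{i,j},0,0)=\psi(\vz_{i,j},0,0)$, and the two mixed costs reduce to $\psi(\vz_{i,j},1,0)$ and $\psi(\vz_{i,j},0,1)$ once the half-terms cancel. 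Combining the profile and pairwise accounting yields cut capacity $=E(\mathbf{L})$ identically.

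Finally I would confirm that the construction is a legitimate min-cut instance by checking that all capacities are nonnegative: the terminal capacities are nonnegative whenever the profile energies are, and each pairwise capacity $\psi(\vz_{i,j},1,0)-\frac{1}{2}\psi(\vz_{i,j},0,0)$ is nonnegative because the homophily inequalities give $\psi(\vz_{i,j},1,0)\ge\psi(\vz_{i,j},0,0)\ge 0$ (and symmetrically for the reverse edge); here the submodularity inequality $\psi(\vz,1,1)+\psi(\vz,0,0)\le\psi(\vz,1,0)+\psi(\vz,0,1)$ is exactly what guarantees graph-representability in the sense of \cite{kolmogorov2004energy}. Since capacity equals energy for every cut-labeling pair, a cut of minimum capacity corresponds exactly to a labeling of minimum energy, which is the claim. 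I expect the bookkeeping in the four-case check---in particular correctly attributing the split $\frac{1}{2}\psi(\vz_{i,j},0,0)$ terms to the source edges and watching them cancel against the pairwise edge capacities---to be the only genuinely delicate step; everything else is a direct edge count.
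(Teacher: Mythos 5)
Your proof is correct and follows essentially the same route as the paper's: the paper's Lemma~\ref{lem} establishes exactly your capacity-equals-energy correspondence by enumerating the cut-set of the $\mathbf{L}$-configuration cut case by case (its three cases are your four, with the mixed case handled WLOG) and showing the $\frac{1}{2}\psi(\vz_{i,j},0,0)$ half-terms on the source edges recombine with the pairwise edge capacities to give $\sum_{i<j}\psi(\vz_{i,j},\ell_i,\ell_j)$, after which the theorem follows from the labeling--cut bijection just as in your final step. Your added check that the homophily inequalities force nonnegative edge capacities is a sensible touch the paper leaves implicit (it cites \cite{kolmogorov2004energy} for representability), but it does not change the substance of the argument.
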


Theorem \ref{thm: optimality} greatly simplifies the classification step and is a result of the constraints we impose on the link functions, which are quite general.

%
%

\section{Choosing the Energy Functions}

    We now discuss our choices for the energy functions $\psi$ and $\phi$.  These functions quantify the trade-off between the value of the information contained in a user's profile (e.g., the user's self-identified ``location'') and value of the user's social connections.  For our implementations, we use a fixed link energy function that follows from the findings of \cite{backstrom2010find}, \cite{mcgee2013location}, and \cite{davis2012unsupervised}.  For the profile energy function, we compare two approaches: a naive approach in which we fix the profile energy based on qualitative observations, and a parametric approach in which we fit a probabilistic model to a subset of the data for which location labels are available.  

    \subsection{Link Energy Function}

    Social media research has consistently shown that users tend to connect to other users with whom they have an existing relationship outside of social media (see, e.g., \cite{backstrom2010find}, \cite{Davis-InferringLocation}) and that relationships between users with lower degrees tend to be indicative of closer relationships \citep{gilbert2009predicting}.  Based on these findings we assume that the utility of a social media relationship in inferring that two users belong to the same location decreases as the number of relationships (or degree) of linked users grows.  If we observe that a user is following a superstar with millions of social media connections, for example, we would not consider that online relationship to be very valuable in location inference.  On the other hand, if two users are connected and each has only a total of 20 online connections, we consider that relationship to be indicative of an existing relationship outside of social media, which could mean that the users live in close proximity to each other.

    Suppose user $1$ is following user $2$ in a directed social network.  We also observe that user $1$ follows a total of $z_{1}$ other users and that user $2$ has a total of $z_{2}$ followers.  We encode this information in vector $\vz_{1,2}$, and use a sigmoid function to model our intuition on link energy, setting
    \[
    \psi(\vz_{1,2},1,0)=\psi(\vz_{1,2},0,1) = \frac{\gamma}{1+\exp(-2+(2/\alpha_{1})z_{1}+(2/\alpha_{2})z_{2})}.
    \]
    Using this form, the parameters $\gamma$, $\alpha_{1}$, and $\alpha_{2}$ all have useful interpretations.  Parameter $\gamma$ is the link energy of the closest relationships.  If user 1 has very few friends and user 2 as very few followers, this function approaches $\frac{\gamma}{1+\exp(-2)}\approx\gamma$.  The parameters $\alpha_{1}$ and $\alpha_{2}$ are the numbers of friends and followers, respectively, that would result in half this link energy.  

    Based on the findings of \cite{mcgee2013location} and our own investigation of Twitter relationships, we fixed $\alpha_{1}=500$ and $\alpha_{2}=5000$.  We have observed that users with more than about 500 friends tend to be connected to more celebrities, politicians, and media sites, while users with more than about 5000 followers tend to start having more than just a local following.  Figure \ref{fig: psi_decay} illustrates how this function decays as the degree of each node in a social media relationship increases.  This figure illustrates the energy of a directed relationship in which user 1 is following user 2. In the left-hand plot, user 1's out-degree, or friends count, is fixed at 20.  We see that if user 2 has close to 0 followers, the link energy is close to $\gamma$, but decays as the number followers grows.  The right-hand plot shows the same effect as the number of user 1's friends increases, while user 2's follower count is fixed at 20.

    \begin{figure}
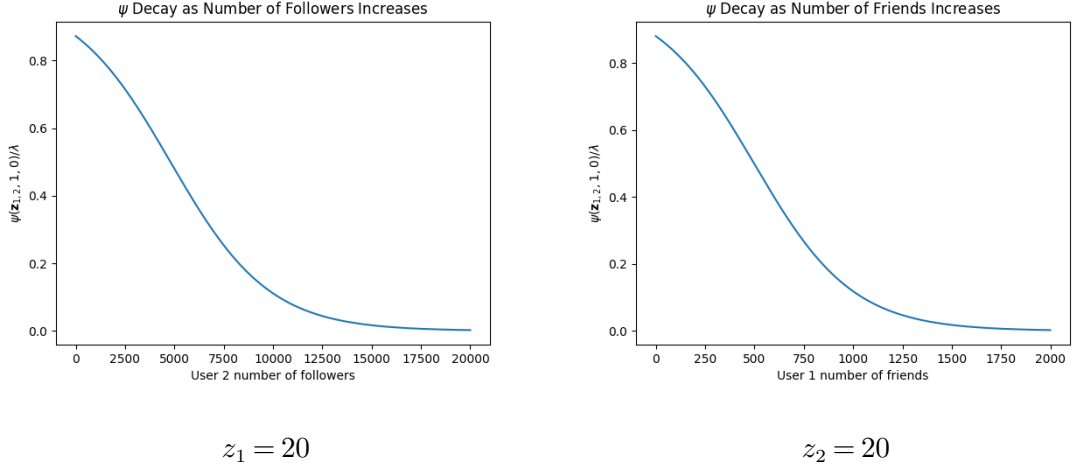

    \centering
    \begin{tabular}{cc}
    \includegraphics[width=0.45\textwidth]{followerdecay.png} & \includegraphics[width=0.45\textwidth]{frienddecay.png} \\
    $z_{1}=20$ & $z_{2}=20$
    \end{tabular}
    \caption{Decay of link energy $\psi(\vz_{1,2},1,0)$ as the number of user 1 friends or number of user 2 followers increases.} \label{fig: psi_decay}
    \end{figure}

    The function $\psi(\vz_{ij},1,0)$ can be interpreted as a log likelihood ratio.  Noting that we have set $\psi(\vz_{ij},1,1)=0$, we can express the link energy as 
    \begin{align*}
    \psi(\vz_{ij},1,0)&=\log\left(\frac{e^{-\psi(\vz_{i,j},1,1)}}{e^{-\psi(\vz_{ij},1,0)}}\right).
    \end{align*}
    This is the log ratio of factor potentials from our factor graph model.  This can be thought of as the log likelihood ratio that an observed relationship is indicative of two users sharing the same location.  This interpretation is useful in considering our choice for the parameter $\gamma$.  In our implementations, we initially set $\gamma=\log (5)$, which implies that low-degree relationships are about five times more likely to share a common location than not.  We show through sensitivity analysis that this achieves good performance in many cases.

    We have addressed the link energy value for connected users when both users are in the location of interest ($\psi(\vz_{ij},1,1)$), and when one user is in the location of interest and the other is not ($
    \psi(\vz_{ij},1,0),\psi(\vz_{ij},0,1)$).
    We still have to address the link energy value when the pair of connected users is not in the location of interest ($\psi(\vz_{ij},0,0)$).  Because our approach continues to collect friends and followers from users classified within the location, we do not expect to obtain many, or perhaps even any edges between user pairs in which both users are outside of the location of interest.  By assumption, 
    \[
    \psi(\vz_{i,j},1,1)\leq\psi(\vz_{i,j},0,0)\leq\psi(\vz_{i,j},1,0)=\psi(\vz_{i,j},0,1).
    \]

    Given these bounds, where we set link energy $\psi(\vz_{i,j},0,0)$ in this range provides for interesting discussion.  On one hand, we can set $\psi(\vz_{i,j},0,0)=\psi(\vz_{i,j},1,1)=0$, arguing that users assigned to the same location class should always have zero link energy.  However, this fails to recognize that unlike user pairs in location class 1, two users in location class 0 do not necessarily share the same geographic location.  It follows that relationships between two users in location class 0 should be associated with some positive energy, implying that they are less probable than relationships between users in location class 1.
    Therefore, we assume that the link energy between users in location class 0 is very close to the link energy between users in different location classes.  Specifically, we set
    \[
    \psi(\vz_{i,j},0,0)=\lambda\psi(\vz_{i,j},1,0),
    \]
    where $\lambda$ is close to, but less than 1.  We provide sensitivity analysis of this decision using other values of $\lambda \in [0,1]$.

    We can think of $\lambda$ as a way of dampening the effect of relationships with users in location class 0.  If $\lambda=0$, then a user $i$ with many strong relationships with other users in location class 0 will be ``pulled'' into location class 0 with them.  If $\lambda=1$, user $i$'s connections with users in location class 0 will not have a direct affect on $i$'s classification, because these relationships will result in the same link energy in either case.

    \subsection{Profile Energy Function}

    We adopt two different approaches for coming up with a profile energy function.  In the first case, we assume that there is no labeled data available.  In this case, an analyst can rely on observations, expert information, or intuition to construct a simple and yet potentially powerful profile energy model.  Alternatively, if there is some labeled data available or if a feasible method exists for labeling some of the data as it is collected, an analyst can fit a parametric probability model to the labeled data and use this model to construct a link energy function.

    Just as the link energy function has a probabilistic interpretation as a log likelihood ratio, the link energy function has an analogous interpretation.  Specifically,
    \[
        \phi(\vx_{i},0)-\phi(\vx_{i},1)
        =\log \frac{
            e^{
                -\phi(\vx_{i},1)
            }
        }{
            e^{
                -\phi(\vx_{i},0)
            }
        }.
    \]
    Because the factor potentials can be scaled without affecting the joint probability distribution, the scaling will be absorbed into the partition function in equation \eqref{eq: joint_dist}. Only the difference between location potentials, and not the values themselves, are relevant in optimizing user $i$'s location assignment.  Rescaling the energy functions is equivalent to adding a constant to the objective function in equation \eqref{eq: energy_optimization}.  Therefore, when determining a profile energy function, one need only be concerned with the difference, $\phi(\vx_{i},0)-\phi(\vx_{i},1)$ for each user $i$, recognizing that this difference represents the log likelihood ratio of the location classes.

    \subsubsection{Fixed Profile Energy Model} \label{sec: fixed-model}

    In the absence of labeled data, an analyst could use intuition, expert information, or observations to produce a simple odds table, from which the profile energy function could be produced.  For example, suppose the analyst is interested in finding users in Boston, Massachusetts.  The analyst might decide observing the word ``Boston'' in a user's profile location field is a useful feature in this classification.  Therefore, the analyst could simply conjecture odds for each feature category, such as those given in Table \ref{table: example_odds}.

    \begin{table}[!hbt]
    \centering
    \fontsize{11}{18}\selectfont
    \caption{Example odds tabled used to construct a profile energy function.} \label{table: example_odds}
    \vspace{6pt}
    \begin{tabular}{l|cc}
    Feature & Odds (location:non-location) & $\phi(\vx_{i},0)-\phi(\vx_{i},1)$ \\ \hline
    Profile loc. includes ``Boston'' & 20:1 & $\log(20)$ \\
    Profile loc. does not include ``Boston'' & 1:10 & $-\log(10)$ \\[12pt]
    \end{tabular}
    \end{table}

    In practice we found that this relatively naive approach to constructing a link energy function achieves performance similar to the parametric approach described below.  

    \subsubsection{Parametric Approach} \label{sec: logistic_regression}

    The drawback of the naive approach is that it requires an analyst to come up with a set of location features and a location odds table based on those features.  This task can become very difficult as the number of features increases.  However, there might be very many features that are useful in user location classification, and these features might not be mutually exclusive.  In this case, a parametric model would be useful in constructing the profile energy function.  Because of its simplicity, we propose a linear model:
    \[
    \phi(\vx_{i},0)-\phi(\vx_{i},1) = \beta^{T}\vx_{i}.
    \]
    Using our interpretation of the profile energy difference, $\phi(\vx_{i},0)-\phi(\vx_{i},1)$, as a log likelihood ratio, this linear model is the well-known logistic regression model and is easily fit on a set of labeled data using existing methods and open source software packages.  We fit a regularized logistic regression model, which finds parameters $\beta$ by performing the following optimization:
    \[
    \maximize_{\beta}~C\left(\sum_{i: \; \ell_{i}=1}\log\left(\frac{1}{1+\exp(-\beta^{T}\vx_{i})}\right)+\sum_{j:\; \ell_{j}=0}\left(\frac{1}{1+\exp(\beta^{T}\vx_{j})}\right)\right) - \|\beta\|,
    \]
    where $C$ is the regularization parameter and $\|\beta\|$ is the regularization norm ($\sum|\beta|$ for $L1$ regularization and $\frac{1}{2} \beta^{T}\beta$ for $L2$ regularization) \citep{scikit-learn}.

\section{Implementations}

    In this section we provide an analysis of the results obtained by applying our expand-classify approach to collect a set of users from different locations.  We choose several locations where the primary language is not English and where we do not have any prior knowledge about local trends and customs to demonstrate the power of our approach.   The locations include small and large cities from all over the world.

    For each location we provide a brief overview, a summary of the composition of the set of seed users, implementation details, and an analysis of the results obtained. We use the geo-tagged posts of users as a ground truth location label in order to evaluate the accuracy of our approach.  For one of the locations we analyze the content of the users to show that we can detect the onset of political unrest.
    
    In order to obtain a set of seed users for each location, we used the Twitter ``user search'' API which allows queries for users meeting certain criteria \citep{rest-usersearch}.  In some cases, this method did not return any results for a specific location, and a more general location query string was used.  The nature of the resulting seed set will be provided for each location.

    \subsection{Corinto, Colombia}

    Corinto, Colombia is a town in the Cauca district of Colombia, located about 30 miles southeast of Cali.  Including the population of its nearby and larger neighbor Miranda, the Corinto area has a population of approximately 30,000 people.    Using Google Maps \citep{googlemaps}, we located its center at 3.174159\degree N, 76.25880\degree W.  For labeling geo-tagged tweets, we used a radius of 7 miles (see Figure \ref{fig: Corinto}).  This radius includes Miranda as well as some of the smaller nearby towns but does not include any part of the Cali metropolis.

    \begin{figure}[!htb]
    \centering
    \includegraphics[scale=0.33]{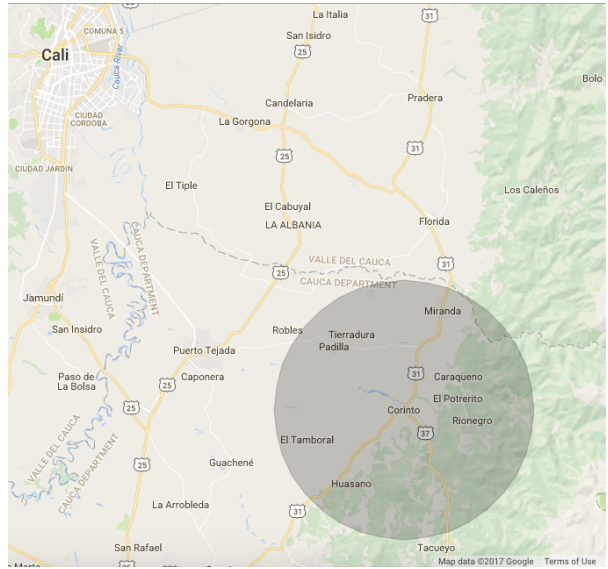}
    \caption{Corinto, Colombia label radius, plotted on Google Maps \citep{googlemaps}.}\label{fig: Corinto}
    \end{figure}

    \subsubsection{Seed Set}

    Querying the Twitter user search API for ``Corinto, Colombia'' to obtain a seed user set did not return any results.  Instead, we obtained 128 users returned from running the individual queries ``Corinto'' and ``Colombia'' in the user search API.  Of these results, 67 profiles contained the word ``Corinto'' in the location, description, name, or screen name, while 62 of the results contained the word ``Colombia'' in at least one of these four fields.  Only one result contained both strings.  

    Searching through the tweets from these accounts yielded 12 geo-located tweets; of these only one was inside the 7 mile radius depicted in Figure \ref{fig: Corinto}.  These locations were consistent with a manual inspection of the seed accounts, which included profile locations from varying locations throughout Colombia and from around the world.  The seed set appeared to contain very few accounts in the target location.

    \subsubsection{Logistic Regression Energy Model}

    In order to fit a logistic regression model on the data, we developed a method of extracting features from user profiles that might be useful in predicting the user's location classification.  First, we created two lists of character strings, $W_{1}$ and $W_{2}$, which we compared to each user's profile information.  List $W_{1}$ was comprised of strings that we thought might indicated a user was associated with the target location, while $W_{2}$ contained strings that would suggest a user was not associated with the target location.  Each string from list $W_{1}$ was used to generate four binary feature variables, corresponding to the user's profile location, description, name, and screen name fields.  The character strings that comprise list $W_{1}$ for Corinto are in Table \ref{table: Corinto_terms}.

    \begin{table}[!hbt]
    \fontsize{11}{18}\selectfont
    \centering
    \caption{List of character strings $W_{1}$ used to extract profile features for Corinto logistic regression.} \label{table: Corinto_terms}
    \vspace{6pt}
    \begin{tabular}{llll}
        ``Corinto'' & 
        ``Cauca'' & 
        ``Colombia'' & 
        ``Miranda'' \\
        ``Corinto Colombia'' & 
        ``Corinto, Colombia'' & 
        ``Miranda, Colombia'' &
        ``Miranda Colombia''\\
        ``Corinto Cauca'' & 
        ``Corinto, Cauca'' & 
        ``Miranda, Cauca'' & 
        ``Miranda Cauca'' 
    \end{tabular}
    \end{table}

    Because there are 12 character strings in this list, there were 48 corresponding binary variables in the logistic regression model.  We include strings containing the location Miranda, Colombia in list $W_{1}$ because Miranda is a population center within the 7-mile radius of Corinto.

    List $W_{2}$ simply contained a list of world cities with populations over 1,000,000 from \cite{maxmind}.  This list is contained in Appendix \ref{appendix: world_cities}.  The list $W_{2}$ generates four additional binary variables: one for each user profile field (location, description, name, and screen name). If any of these fields contained a string from $W_2$, the feature value is set to one.

    We included five additional binary feature variables: empty location, language, UTC offset, protected account, and verified account.  The empty location variable took value one if the user profile's location field was left empty.  If a user's profile language was set to the local language, Spanish, then the language variable was set to one.  If the profile's time zone matched the local UTC offset, -18000 seconds, the UTC offset variable was set to one.  The protected and verified account variables were set to match each user's account settings, taking value one if the profile was protected or verified, respectively, and zero otherwise.    
    
    The total number of features in the logistic regression model for Corinto is 57.    For the response variable, we used geo-located tweets posted by the users.  We searched through users' most recent posts and identified any tweets that contained geo-location data.  Of these, we extracted the post that had coordinates closest to the center of Corinto.  If these coordinates were within seven miles of the grid coordinates at the center of the target location, the user was labeled as being inside the target location ($\ell_{i}=1$),  otherwise the user was labeled as outside of the target location ($\ell_{i}=0$).  Users with no geo-tagged tweets were not included in the logistic regression model.

    In fitting the logistic regression models, we set aside some geo-located content for validation and testing.  We used $L1$ regularization and, through validation, found the model achieved the best performance using a regularization coefficient of approximately $C=1$.  

    \subsubsection{Performance}

    Beginning with the seed user set, we iterated the expand-classify approach for four hours.  In the \emph{expand} step of each iteration, we randomly selected up to 30 users from the set of users classified in the target location and queried up to 5000 of each user's Twitter followers.  We randomly selected another 30 users from the set of those classified in the target location and queried up to 5000 of each user's Twitter friends.  We used these values to expand the set of profiles efficiently while remaining within the API rate limits established by Twitter \citep{rest}.

    Following the collections in each iteration, all of the users in the dataset would be classified.  First, the $L1$-regularized logistic regression model was fit and validated on a randomly selected subset of the geo-located users.  Using the resulting linear model as a profile energy function, all of the users in the data set were then classified by finding the minimum cut on the Energy Graph as described in Section \ref{sec: optimization}.  After completing this classification, the \emph{expand} step in the next iteration would begin.

    After four hours, the number of user profiles collected was 140,571.  Of these, 988 were classified as being in the Corinto, Colombia target area.   From the classification results, we constructed a local probability for each user being in the target location:
    \[
        P_{1}(i)=
            \left(
            1+\exp\left(
                \phi(\vx_{i},1)
                -\phi(\vx_{i},0)
                +\sum_{j\neq i}\left[
                \psi(\vz_{ij},1,\ell_{j})
                -\psi(\vz_{ij},0,\ell_{j})
                \right]
            \right)
            \right)^{-1}
    \]
    The probability follows from our factor graph model, holding all location classifications fixed and examining the probability associated with each location class for user $i$.  Using these local probabilities, we construct a Receiver-Operator Characteristic (ROC) curve to evaluate the results on the set of users for which geo-located tweets were available.  The ROC for this collection of Corinto users is plotted in Figure \ref{fig: Corinto-LR-ROC}.  The figure shows that this implementation can correctly classify about 80\% of the users in the target location radius while maintaining a low false positive rate.  The area under the ROC curve (AUC) is 0.92, which is very near the maximum value of one, indicating that the method has good accuracy.   

    \begin{figure}[!htb]
    \centering
    \includegraphics[scale=0.6]{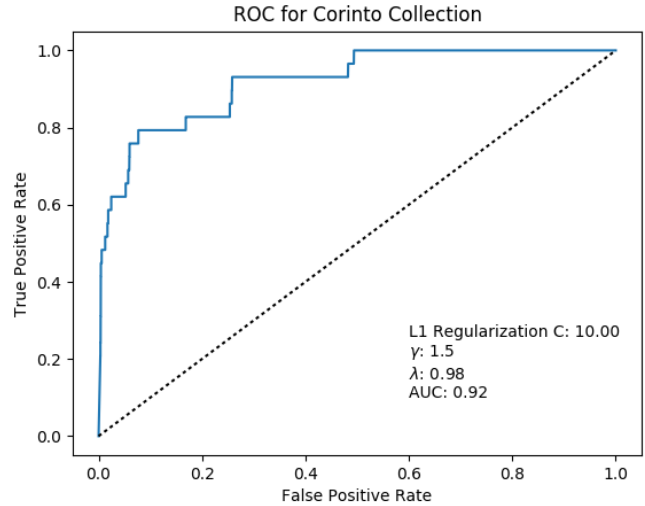}
    \caption{Corinto user classification ROC using logistic regression model for profile energy.} \label{fig: Corinto-LR-ROC}
    \end{figure}

%

    \subsubsection{Sensitivity Analysis}

    We now briefly discuss and illustrate this implementation's sensitivity to the inputs $\gamma$, $\lambda$, as well as the logistic regression regularization.  The parameter $\gamma$ serves as the magnitude of the sigmoid curve that governs the decay of link energy as the number of friends and followers increases.  Higher values of $\gamma$ result in larger link energies, which cause network connections to have more influence over each user's classification.  

    Figure \ref{fig: Corinto-gamma-lambda-sens} depicts the classification ROC curve plotted using several different values for $\gamma$.  Of particular note is the case where $\gamma=0$, which recovers the logistic regression classification without any network information.  Comparison with this curve provides a quantification of the utility of the network structure in this classification model.  Based on the AUC metric, we find that optimal performance appears to occur for higher values of $\gamma$, with $\gamma=\log(10)$ producing an AUC of 0.94.  However, small variations from this value do not appear to substantially impact performance.  Using only the logistic regression model ($\gamma=0$) produces an AUC of approximately 0.64, showing that accounting for network connections in the model substantially improves classification performance.

    The parameter $\lambda \in [0,1]$ sets the link energy for users that are both classified in location set 0 (outside of the target location).  We have set this parameter to $0.98$, so that these relationships are approximately the same cost as relationships for which one user is in the target location and the other is not.  Figure \ref{fig: Corinto-gamma-lambda-sens} depicts the sensitivity of the ROC curve to this value.      Higher values of $\lambda$ appear to produce the best results, and very low values performing very poorly.  This poor performance results from users in the target location being misclassified as location 0 as a result of relationships with other users in location class 0.  Good performance is maintained for values of $\lambda > 0.75$.
  \begin{figure}[!htb]
	\centering
	\includegraphics[scale=0.5]{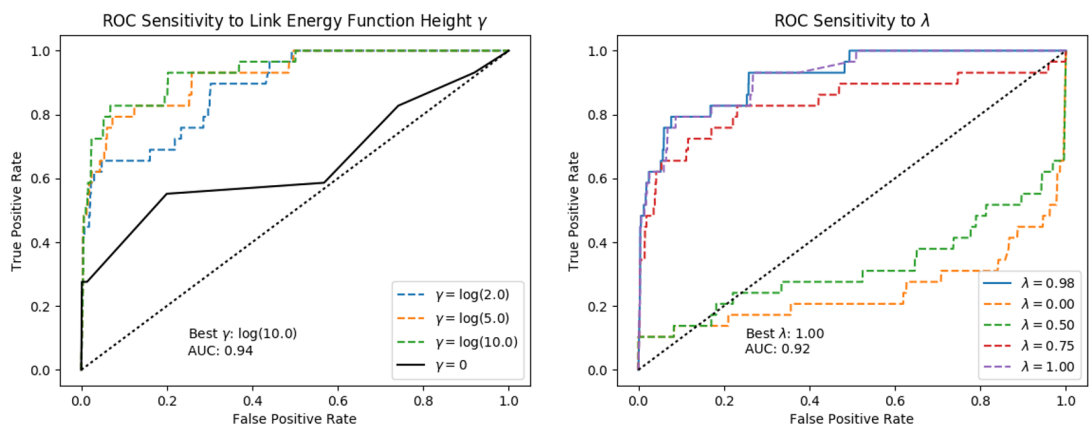}
	\caption{(left) Sensitivity of Corinto user classification to parameter $\gamma$. (right) Sensitivity of Corinto user classification to parameter $\lambda$.} \label{fig: Corinto-gamma-lambda-sens}
\end{figure}


    Finally, we investigate the model sensitivity to the logistic regression regularization.  In a similar fashion to the above analyses, we fit $L1$ and $L2$ regularized logistic regression models to the geo-located training data using different value for the regularization coefficient.  Without fitting the regularization coefficient through model validation, we applied the resulting linear function directly as the profile energy model.  We found that the model performance was neither sensitive to the regularization norm ($L1$ vs. $L2$) nor to the regularization coefficient, except for in cases in which we significantly over-regularized the logistic regression.      Figure \ref{fig: Corinto-LR-sens} shows the classification ROC using for several regularization constants for both $L1$ and $L2$ regularization.  We observe that a lower value of $C$, which implies a more regularized model, results in a slight increase in performance from the value found through validation.

    \begin{figure}[!htb]
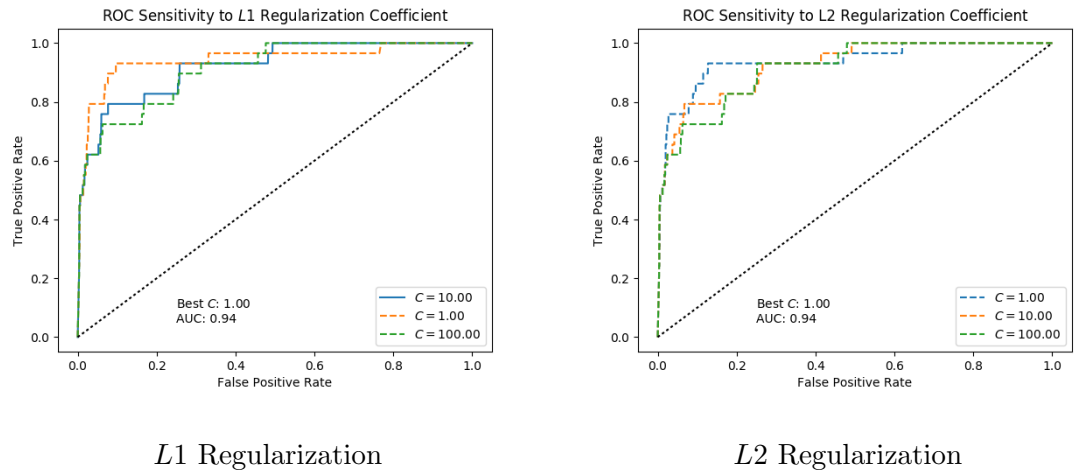

    \centering
    \begin{tabular}{cc}
    \includegraphics[width=0.45\textwidth]{Corinto-L1-sens.png}
    &
    \includegraphics[width=0.45\textwidth]{Corinto-L2-sens.png}\\
    $L1$ Regularization & $L2$ Regularization
    \end{tabular}
    \caption{Sensitivity of Corinto user classification to logistic regression regularization.} \label{fig: Corinto-LR-sens}
    \end{figure}

    \subsubsection{Summary of Corinto Collection}
    The user collection for Corinto provides a useful example of the utility of the expand-classify approach.  Using the output of a logistic regression model as a profile energy function in our factor-graph model produced a classifier with an AUC of 0.92.  Tuning the parameters on test data enables an increase in performance to an AUC of 0.94.  Our method is superior to the current capability of Twitter's user search API, which in this case was not very useful in producing even a seed set of users in Corinto, Colombia. 


    \subsection{Casimiro de Abreu, Brazil}

    Casimiro de Abreu, Brazil is a town in the Rio de Janeiro state of Brazil, located at 22.484\degree S, 42.202\degree W, about 80 miles east of the city of Rio de Janeiro. It has a population of approximately 35,000%
    .  For labeling geo-tagged tweets, we used a radius of 5 miles (see Figure \ref{fig: Casimiro}).  Based on the imagery available on Google Maps, there are no substantial population centers within this radius.  Casimiro de Abreu falls in the costal region of Barra de S\~ao Jo\~ao.

    \begin{figure}[!htb]
    \centering
    \includegraphics[scale=0.3]{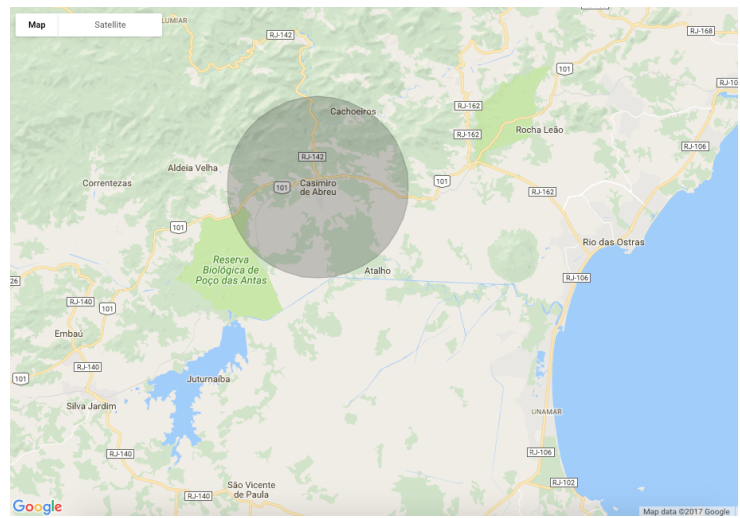}
    \caption{Casimiro de Abreu, Brazil label radius, plotted on Google Maps \citep{googlemaps}.}\label{fig: Casimiro}
    \end{figure}

    \subsubsection{Seed Set}

    Querying the Twitter user search API for ``Casimiro de Abreu, Brazil'' did not return any results.  Instead, we used 11 user profiles returned from running a user search query on ``Casimiro de Abreu.''  Of these results, 10 profiles contained the string ``Casimiro de Abreu'' within the profile information fields, and two of them also contained the string ``Brasil''.  Several of the profiles contained the string ``Casimiro de Abreu'' in a way that did not necessarily refer to a location, and at least two of the profiles indicated a locations that were outside of Brazil.  None of these accounts had recent tweets with geo-location information.  As in the Corinto collection, this seed set did not appear to contain many tweets from the target area.

    \subsubsection{Fixed Profile Energy Function}

    Attempting to use logistic regression as a profile energy model failed to produce a useful set of users from the target area.  The reason for this failure is that there were not enough geo-located users in each iteration inside the target radius to fit a reliable logistic regression model.  Poor classifications in each iteration resulted in more collections of users outside the target region in follow-on iterations, and the problem perpetuated.

    For this reason we implemented the fixed energy model approach introduced in Section \ref{sec: fixed-model}.  We used the location specific terms we would have used in list $W_{1}$ to create three lists:
    \begin{itemize}
    \item $T_{1}$: A list of character strings or sets of character strings that, if present in a user's profile information, essentially indicate that a user is in the target location.  For example, if a user's profile contains both of the strings ``Casimiro de Abreu'' and ``Brasil,'' we can assume that the user is very likely to be in Casimiro de Abreu, Brazil.
    \item $T_{2}$: A list of character strings that, if present in a user's profile information, strongly suggest that a user is in the target location.  ``Casimiro de Abreu'' is in this list.
    \item $T_{3}$: A list of character strings that, if set as a user's profile location, suggest that a user could be in the target location.  ``Brasil'' is in this list.
    \end{itemize}

    These lists are enumerated in Table \ref{table: Casimiro_lists}.    We used these lists to categorize users according to the following algorithm:
    \begin{enumerate}
    \item If a user's profile meets any of the criteria in list $T_{1}$, assign category A;
    \item Else if a user's profile location contains a string from the world cities list in Appendix \ref{appendix: world_cities}, excepting Rio de Janeiro, assign category B;
    \item Else if a user's profile location, description, name, or screen name contains a string from list $T_{2}$, assign category C;
    \item Else if a user's profile location \emph{is equal to} a string from list $T_{3}$, or if the profile location is empty, assign category D;
    \item Else assign category E.
    \end{enumerate}

    \begin{table}[!htb]
    \centering
    \fontsize{11}{18}\selectfont
    \caption{Categorization lists for Casimiro de Abreu, Brazil.}\label{table: Casimiro_lists}
    \vspace{6pt}
    \begin{tabular}{l|l|l}
    \multicolumn{1}{c|}{$T_{1}$} & \multicolumn{1}{c|}{$T_{2}$} & \multicolumn{1}{c}{$T_{3}$} \\ \hline
    ``Casimiro de Abreu'' AND ``Brasil'' & ``Casimiro de Abreu'' & ``Brazil'' \\
    ``Casimiro de Abreu, RJ'' & & ``Barra de S\~ao Jo\~ao''  \\
    ``Casimiro de Abreu'' AND ``Rio de Janeiro'' & & ``Brasil'' \\
    & & ``Brazil''
    \end{tabular}
    \end{table}

    We then applied the odds table given in Table \ref{table: implementation_odds} to construct the profile energy function.  These odds can be thought of as relationship thresholds required to classify a user in each category into the target location.  Note that users in category C, whose profiles contain keywords or phrases from list $T_{2}$, are assumed to have a higher likelihood of being outside of the target location.  However, a relatively weak relationship with a user inside the target location would be enough to overcome these odds.  On the other hand, a user in category B would require many strong connections with users in the target location in order to be classified in the target location.


    \begin{table}[!hbt]
    \fontsize{11}{18}\selectfont
    \centering
    \caption{Odds Table Used in Naive Implementations} \label{table: implementation_odds}
    \vspace{6pt}
    \begin{tabular}{l|cc}
    Category & Odds (location:non-location) & $\phi(\vx_{i},0)-\phi(\vx_{i},1)$ \\ \hline
    A & 50:1 & $\log(50)$ \\
    B & 1:25 & $-\log(25)$ \\
    C & 1:2 & $-\log(2)$ \\
    D & 1:5 & $-\log(5)$ \\
    E & 1:8 & $-\log(8)$ \\
    \end{tabular}
    \end{table}

    \subsubsection{Performance}

    We employed the expand-classify methodology exactly as in the Corinto collection, except that in each iteration we used the fixed odds in Table \ref{table: implementation_odds} as the profile energy model.  After three hours, this implementation collected 99,606 users and had classified 492 of them as being in the target location.  Of the 99,606 users in the dataset, 7,729 of them had geo-located tweets, 53 of which were inside the target radius.  The resulting AUC for this classifier on the geo-located users was 0.89.  Similar to the Corinto results, the classifier achieves approximately 60\% correct detection rate while maintaining a very low false positive rate.

    \subsubsection{Sensitivity Analysis}

    Figure \ref{fig: Casimiro-gamma-sens} shows the sensitivity of the ROC for this classifier on the geo-located users for several values of $\gamma$ and $\lambda$.  The results do not appear to be very sensitive to the value of $\gamma$, but the AUC decreases for lower values of $\lambda$.  Larger values of $\lambda$ also seem to show slightly better performance.  

    \begin{figure}[!htb]
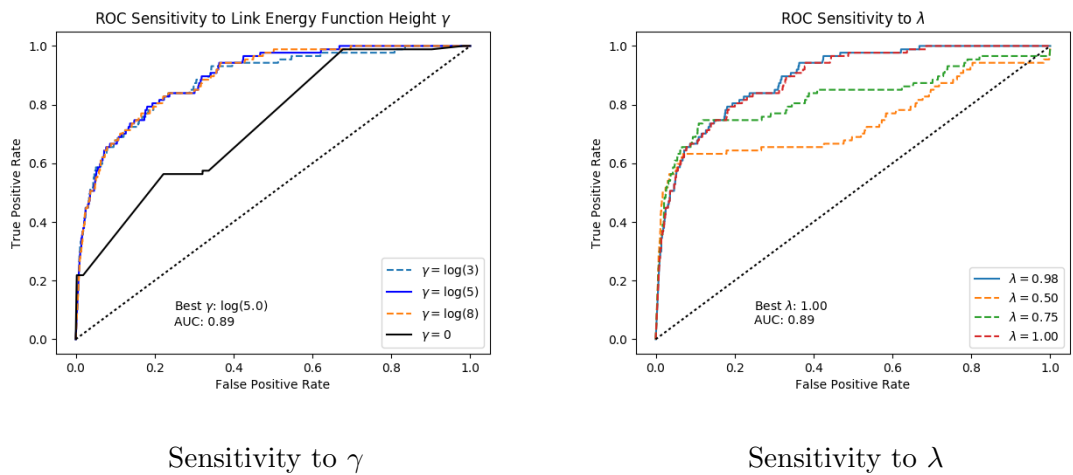

    \centering
    \begin{tabular}{cc}
    \includegraphics[width=0.45\textwidth]{Casimiro-gamma-sens.png}
    &
    \includegraphics[width=0.45\textwidth]{Casimiro-lambda-sens.png}\\
    Sensitivity to $\gamma$
    &
    Sensitivity to $\lambda$
    \end{tabular}
    \caption{Results and Sensitivity of Casimiro de Abreu user classification.} \label{fig: Casimiro-gamma-sens}
    \end{figure}
    Note that the performance of the fixed profile energy function as a classifier is also plotted in this Figure for $\gamma=0$, indicated by a solid black line.  The AUC of this classifier is 0.74. 

    \subsubsection{Summary of Casimiro de Abreu Collection}

    The Casimiro de Abreu collection demonstrates the utility of the expand-classify methodology using a relatively naive approach to forming a profile energy function.  In the case of Casimiro de Abreu, attempts to implement the same approach using the logistic regression classifier were not successful because there were not enough geo-located users in the initial iterations.  Also, we see that our approach produced many more users than searching Twitter.


    \subsection{Caracas, Venezuela}

    Caracas, Venezuela, centered at 10.481\degree N, 66.904\degree W \citep{googlemaps} is the capital of Venezuela. It has a population of approximately 2.1 million, which is much larger than the previous two locations we studied.  We will see this leads to different performance results.    For labeling geo-tagged tweets, we used a radius of 15 miles from the latitude-longitude coordinates above (see Figure \ref{fig: Caracas}).

    \begin{figure}[!htb]
    \centering
    \includegraphics[scale=0.3]{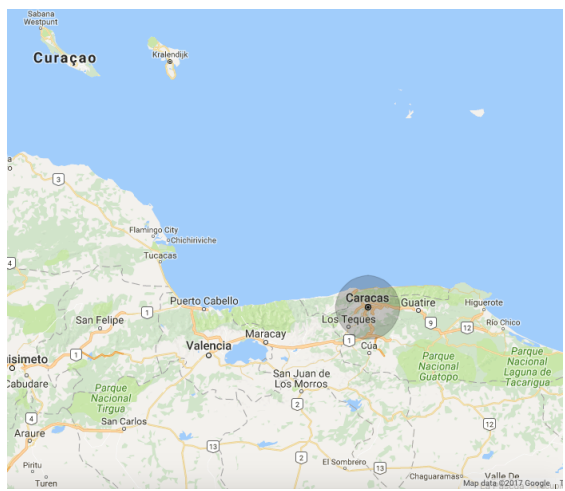}
    \caption{Caracas, Venezuela label radius, plotted on Google Maps\citep{googlemaps}.}\label{fig: Caracas}
    \end{figure}

    \subsubsection{Seed Set}

    Unlike the previous two locations, querying the Twitter API for Caracas, Venezuela returned 983 user profiles, which is close to the API-imposed maximum of 1000.  Of these we used a set of 64 profiles as seed accounts for this collection.

    \subsubsection{Performance}

    We ran the expand-classify algorithm, using logistic regression as the profile energy model, for six hours to collect users in Caracas.    The set of character strings $W_{1}$ used to extract features from the user profiles is given in Table \ref{table: Caracas_terms}.  For $W_{2}$ we again used the list of cities in Appendix \ref{appendix: world_cities}, with Caracas removed.  At the end of the six hour period we had 210,656 users in our dataset, 33,261 of which were classified as being in Caracas.

    \begin{table}[!hbt]
    \fontsize{11}{18}\selectfont
    \centering \fontsize{11}{18}\selectfont
    \caption{List of character strings $W_{1}$ used to extract profile features for Caracas logistic regression.} \label{table: Caracas_terms}
    \vspace{6pt}
    \begin{tabular}{llll}
            ``Caracas'' &
            ``Caracas, Venezuela'' &
            ``Capital, Venezuela'' &
            ``caracas, vzla'' \\
            ``capital, vzla'' &
            ``Caracas Venezuela'' &
            ``Capital Venezuela'' &
            ``caracas vzla'' \\
            ``capital vzla'' &
            ``distrito capital'' &
            ``venezuela'' &
            ``vzla''
      \end{tabular}
    \end{table}
    In this case our method did not produce the level of performance achieved on the smaller locations.  In fact, the best AUC (0.78) was achieved by simply applying the logistic regression classifier on the profiles; including relationship information does not result in significantly improved performance.  Figure \ref{fig: Caracas-gamma-sens} shows the $\gamma$ sensitivity plot.

    \begin{figure}[!htb]
    \centering
    \includegraphics[scale=0.6]{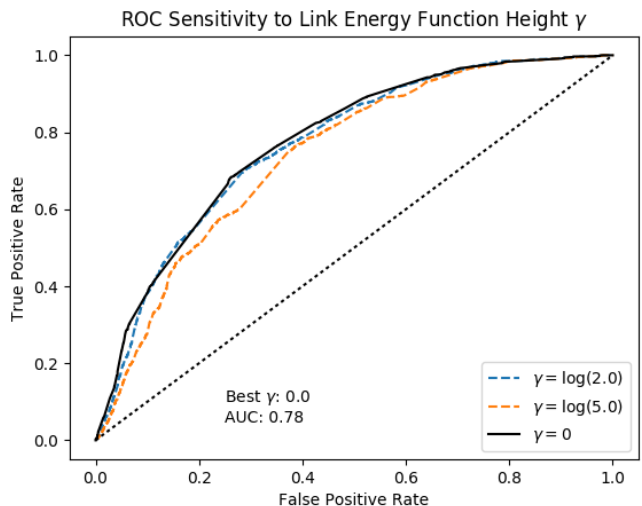}
    \caption{Caracas, Venezuela Performance.}\label{fig: Caracas-gamma-sens}
    \end{figure}
    
    \subsubsection{Discussion of Caracas Collection}

    An obvious difference between Caracas and the other locations presented is its larger population, and these results give us some indication of the limitations of our approach.  Some of the challenges associated with building a set of users from a large city are intuitive: collecting and classifying a larger dataset requires more computational resources.  However, we found a less subtle problem apparent in our attempts to collect users from big cities that relates to our modeling assumptions.  

    \paragraph{Computational Resources.} The first challenge with collecting user sets from large population centers is the problem of computational resources.  Six hours of runtime was not sufficient to collect enough users and links to observe the location homophily implied by our assumptions.  Of the  919 users returned by the Twitter user search API and not used as seed users, only 190 appear in our data. This suggests that we have not run enough \emph{expand} steps to discover many of the users in Caracas.
%
%
%

    \paragraph{Users in Big Cities.} We have assumed that a user whose profile states he or she is in ``Caracas, Venezuela'' is generally going to be in Caracas, but for big cities we have found from our geo-located data that this assumption might not be accurate.      Of the 450 users with geo-located tweets in our dataset, only 205 were located within 15 miles of the center of Caracas.  The remaining 245 were spread throughout the world.  Because of this, a logistic regression classifier  would classify users whose profile locations are ``Caracas, Venezuela'' as being outside of Caracas.  While we observed this property for some user profiles in each location we collected, only in the largest cities did it appear to adversely affect the results.

    Having a preponderance of users whose profiles say they are in Caracas but whose geo-located tweets show they are not brings us to a very important consideration: do we want these users in our target location set?  Some might be Caracas residents who are simply traveling, while others could be studying or working abroad.  Still others might have lived in Caracas in the past but have permanently moved to another location.  Even others could simply be lying.  

    If we do decide these users should be in our dataset, then our approach to fitting a logistic regression model on geo-located users needs to be reworked, because this method of labeling is clearly not a valid proxy for our target set.  If, on the other hand, we do not want these users in the dataset, our method of using the geo-located tweets remains valid. but it brings us to another big city challenge to our assumptions: homophily.

    While we might not know why a user would have a profile location of ``Caracas, Venezuela,'' but tweets geo-located elsewhere in the world, we have observed that many of these users tend to have close connections with other users that appear to be in or near Caracas.  These high-energy, long-distance relationships run counter to our assumption that close relationships tend to indicate shared location.  One plausible conjecture is that people in big cities tend to be more mobile than people from smaller towns, and that mobility has resulted in a larger number of long-distance social media relationships with high link energy scores.  This big-city phenomenon was also observed and documented by \cite{backstrom2010find}.  As a result of it, the homophily that proved useful in the Corinto user classifications is more difficult to exploit in Caracas.

\subsection{Marawi City, Philippines}\label{sec:marawi}
Marawi City, officially known as the Islamic City of Marawi, centered at 8.000\degree N, 124.285\degree E \citep{googlemaps}, is the capital of the province of Lanao del Sur on the island of Mindanao in the Philippines.  It has a population of 201,785, placing it between small cities such as Corinto and large cities such as Caracas in terms of population.  We chose this city for analysis because it has recently been the site of military conflict between the government and members of the extremist group ISIS (Islamic State in Iraq and Syria).  The conflict began on May 23rd, 2017 when the government launched an offensive to capture the leader of an ISIS affiliated group who was reported to be in the city \citep{ref:marawi}.  By May 27th, 2017 nearly 90\% of Marawi's population had been evacuated as the conflict continued.  We will use the content of the users collected from Marawi to detect the onset of the conflict.

\subsubsection{Seed Set}  The seed set of users was obtained through manual search of Twitter.  We obtained 12 users who were clearly in Marawi from the information in their profile.  For instance, some of the users had screen names such as @CITYMARAWI, @choosemarawi, and @marawicity.

\subsubsection{Performance} 
We use logistic regression for the profile energy model.  The set of character strings $W_1$ used are show in Table \ref{table:marawi}.  For $W_2$ we use the list of cities in Appendix \ref{appendix: world_cities}.   After several hours of data collection we had 113,485 users in the dataset, 11,618 of which were classified in Marawi.
As with the other locations studied, we use the geo-located tweets from Marawi as ground-truth to evaluate the classification performance.  
The resulting ROC curve is shown in Figure \ref{fig:marawi_roc}.  At a decision threshold of 0.5, we achieve a true detection rate near 80\% with a false positive rate near 10\%.  The AUC of our approach is 0.89, which indicates a good classification accuracy.   Despite being much larger than the smaller cities we have studied, we are still able to achieve 
a high AUC for Marawi.

    \begin{table}[!hbt]
	\fontsize{11}{18}\selectfont
	\centering
	\caption{List of character strings $W_{1}$ used to extract profile features for Marawi logistic regression.} \label{table:marawi}
	\vspace{6pt}
	\begin{tabular}{llll}
		``Marawi, Philippines''&``Marawi Ph''&``Marawi, Ph''&'`Lanao del Sur''\\
		``Philippines''&``Marawi''&``Mindanao''
	\end{tabular}
\end{table}
We next look at the content posted on Twitter by the nearly 12,000 users our approach classified as being in Marawi.  A simple method to analyze a corpus of text is to use word clouds, which are visualizations in which the size of a word indicates how frequently it occurs in the corpus.  We show the word clouds of the posts of Marawi users for four different dates in Figure \ref{fig:wordcloud}.  As can be seen, for May 21st and 22nd, the most common words are ``BTSBBMAs'' and ``BBMA''.    BBMA stands for the Billboard Music Awards, a music award show which occurred on May 21st, 2017.  ``BTSBBMAs'' refers to the Korean pop band BTS which won an award at the show over other top stars such as Justin Bieber \citep{ref:bts}. Though not shown here, the word clouds for other days before May 23rd show similar pop music related words.  However, the word clouds on May 23rd and 24th show new common phrases, such as ``PrayForMarawi'' and``Allah'' which are associated with the battle in Marawi.  The phrase PrayForMarawi was used to show support for the people of Marawi during the battle.  We found that this phrase was very prominent in the word clouds up to May 25th.  Allah is the name Muslims use for God.  Though Marawi is a Muslim majority city, we did not see Allah in word clouds before May 23rd. Its occurrence after the battle began is most likely tied to people praying for an end to the violence.    

We see here that by analyzing the content of the Marawi users collected with our expand-classify approach, we are able to detect the onset of this political unrest, and with continuing monitoring of these users, we could observe any further developments.  Much of the content was in a mixture of English and local Filipino languages.  Nonetheless, we are still able to quickly assess that an event of significance had occurred.  This shows the utility of being able to monitor a set of social media users from a location.  Even without much prior knowledge one can still easily gain basic situational awareness.
    \begin{figure}[!htb]
	\centering
	\includegraphics[scale=0.5]{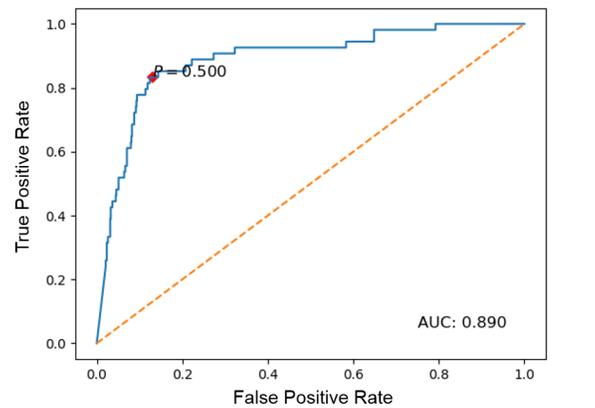}
	\caption{Marawi user classification ROC curve using logistic regression for profile energy function.}\label{fig:marawi_roc}
\end{figure}

    \begin{figure}[!htb]
    \centering
    \includegraphics[scale=0.5]{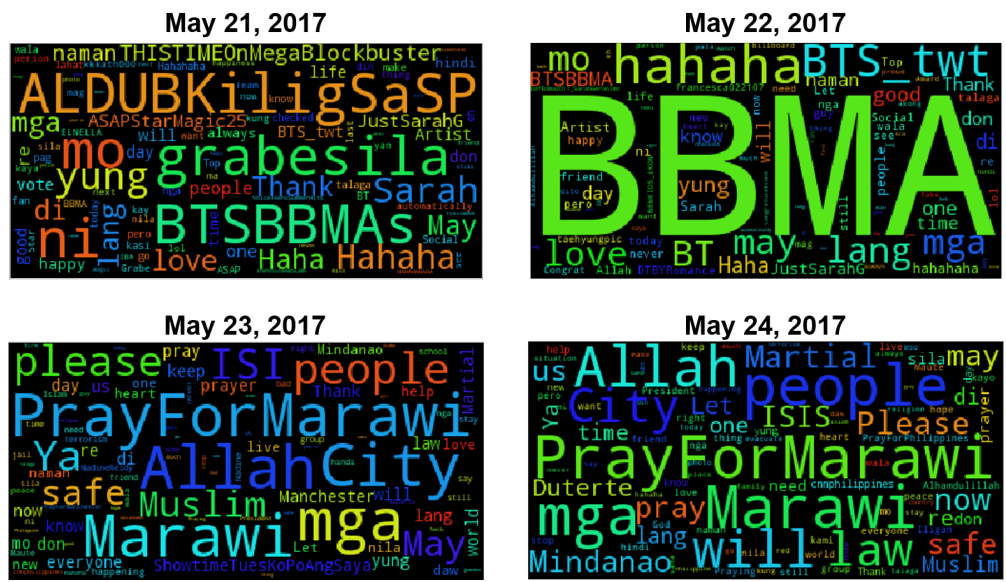}
    \caption{Wordclouds of Marawi users' tweets for different dates.}\label{fig:wordcloud}
    \end{figure}
    \subsection{Summary of Results on All Locations}

    We collected user datasets from multiple locations of varying size and culture.  Table \ref{table: all_results} summarizes the results of these collections, with LR AUC referring to the AUC
		using only logistic regression without any network connection data, and  Model AUC referring to the AUC including network data.  We executed all of the collections for 3-6 hours.  We note that in general, the optimal observed value for $\gamma$ tends to be lower for locations with larger populations.  As we found in our Caracas user collection, there are two likely reasons for this.  First, we did not devote enough computational resources and time to collect and classify the larger set of users.  Second, location-based homophily is less evident in larger population centers, and therefore more difficult to exploit (see \cite{backstrom2010find}).  This is also evident from the larger increase in the AUC for smaller population centers when network data is included.

    Even in small population centers, we find that many of the misclassified geo-located users are those whose profiles and connections indicate they belong in the target location, but whose geo-tagged tweets fall outside of the target location.  This qualitative observation indicates that our method of finding users associated with specific locations is, in some cases, performing better than our evaluation criteria suggest.  

    \begin{table}[!htb]
    \centering
    \fontsize{11}{18}\selectfont
    \caption{Results from user set collections from nine locations, sorted by increasing population.  LR AUC refers to the AUC
		using on logistic regression without any network connection data.  Model AUC is the AUC including network data.}\label{table: all_results}
    \vspace{6pt}
    \begin{tabular}{l|cccc}
    Location & Approx. Population &  LR AUC & Model AUC & Best $\gamma$ \\ \hline
		Zamboanga City, Philippines & 19,542 & 0.81 & 0.87 & $\log (2)$\\
    Corinto, Colombia & 30,000  & 0.75 & 0.92 & $\log (10)$\\
		 Casimiro de Abreu, Brazil & 35,000  & 0.74\raisebox{6pt}{\scriptsize \textdagger} & 0.84 & $\log (8)$ \\
		El Vig\'ia, Venezuela & 156,000  & 0.77 & 0.90  & $\log (6)$ \\
    San Fernando, Venezuela & 165,000  & 0.82 & 0.9 & $\log (8)$ \\
	  Marawi City, Philippines &  201,785 & - & 0.89 & $\log (5)$\\
    Greater Binghamton, NY  & 250,000  & 0.87 & 0.90 & $\log (2)$\\
    Popay\'an, Colombia  & 440,000  & 0.84 & 0.88 & $\log (2)$\\
    Caracas, Venezuela & 2,100,000  & 0.78 & 0.78 & 0 \\
    Asunci\'on, Paraguay & 2,200,000 & 0.64 & 0.7 & $\log (2)$\\
    \multicolumn{5}{l}{\raisebox{6pt}{\scriptsize \textdagger}AUC from fixed profile energy model.}
    \end{tabular}
    \end{table}

\section{Conclusion \& Future Research}

Obtaining a set of social media users from a specific
location is a difficult and important problem. Existing approaches do not allow
one to build large sets of such users or do not accurately retrieve users from the location
of interest.   Our expand-classify approach is able to overcome these limitations and produce
large sets of location-based users for population centers with up to about 500,000 people. 
The expand step allows us to grow a small set of seed users from the location into
very large sets within a few iterations.  The classification step can be efficiently
done by use of a novel factor-graph model which can be optimized using minimum graph-cuts.
Testing based on geo-located content for multiple diverse locations showed that our approach had good accuracy. 
Analysis of content from users from Marawi City obtained with our approach allowed us to detect the onset of political unrest
with minimal prior knowledge of the region.

While the objective of this effort has been to produce a reliable and somewhat comprehensive set of social media users from a specific location, our method could be applied to any social media grouping which exhibits some level of homophily and exhibits some indicators of group membership through users' profile features.  The role of the factor graph model we utilize is to provide a means of considering both the indicators present in a user's profile and the user's social network connections when making classifications.  However, nothing in our model limits these
classification groups to be geographic.   Therefore, our expand-classify approach could be used to build a set of users with a specific political ideology, taste in music, fashion style, or other
features.

Finally, our method might be improved by prioritizing friend and follower queries in the expand step.  Some users might be more inclined to have local friends and followers.  If a probability model can be established that quantifies these inclinations, the network search methods of \cite{alpern2013mining} or \cite{marks2016multi} might be leveraged to grow the set of users in the target location more efficiently.  

\bibliography{loc}
\bibliographystyle{plainnat}

\newpage
\APPENDIX{}
\section{Proof of Theorem \ref{thm: optimality}}\label{sec:proof_thm}
 Before proving Theorem \ref{thm: optimality}, it is useful to establish our notation and provide an important Lemma.  We let $\mathcal{G}=(\mathcal{V},\mathcal{E})$ be the Energy Graph representation of for an energy function of the form in equation \eqref{eq: energy_function}, which adheres to the assumptions in Section \ref{sec: assumptions}, where $\mathcal{V}$ is the set of nodes and $\mathcal{E}$ is the set of edges.  A valid $s$-$t$ cut is a partition of $\mathcal{V}$ into two subsets: $S$ and $T$, where source node $s \in S$ and sink node $t \in T$.  We define the cut-set $C\subset \mathcal{E}$ as the set of directed edges going from any node in set $S$ to any node in set $T$.  
For arbitrary location classifications $\mathbf{L} = (\ell_{1},\ell_{2},\ldots,\ell_{N}) \in \{0,1\}^{N}$,
consider the $s$-$t$ cut on the Energy Graph $\mathcal{G}$ that partitions the nodes according to their location classes.  We denote the set of nodes in the subset belonging to the source node as $S_{\mathbf{L}}$, where node $u_{i} \in S_{\mathbf{L}}$ for all users $i$ for which $\ell_{i}=1$.  Likewise, $u_{j} \in T_{\mathbf{L}}$ for all $j$ such that $\ell_{j}=0$.  We refer to this cut as the $\mathbf{L}$-\emph{configuration} cut on the Energy Graph.

We state our Lemma.
\begin{lemma}[Graph Equivalence] \label{lem}
	Suppose we are given an energy function of the form
	\[
	E(\mathbf{L})=\sum_{i}\phi(\vx_{i},\ell_{i}) + \sum_{i<j}\psi(\vz_{i,j},\ell_{i},\ell_{j}),
	\]
	where functions $\phi(\vx_{i},\ell_{i})$ and $\psi(\vz_{i,j},\ell_{i},\ell_{j})$ adhere to the assumptions in Section \ref{sec: assumptions}.  Then, for arbitrary location classification vector $\mathbf{L} \in \{0,1\}^{N}$, the value of the function $E(\mathbf{L})$ is equal to the capacity of the  $\mathbf{L}$-configuration cut on the corresponding energy graph.
\end{lemma}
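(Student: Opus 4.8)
The plan is to compute the capacity of the $\mathbf{L}$-configuration cut directly, edge type by edge type, and then to verify that the summed contributions reassemble term for term into $E(\mathbf{L})$. First I would recall that the cut capacity is the total capacity of edges directed from $S_{\mathbf{L}}$ to $T_{\mathbf{L}}$, and that the Energy Graph has exactly three kinds of edges: the source edges $s \to u_i$, the sink edges $u_i \to t$, and the paired user--user edges $u_i \to u_j$ and $u_j \to u_i$. Since only connected pairs carry nonzero user--user capacity and, by the no-relationship assumption of Section \ref{sec: assumptions}, $\psi(\vz_{i,j},\cdot,\cdot)=0$ for unconnected pairs, I may restrict the whole accounting to connected pairs.

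Next I would handle the single-node (profile) edges. A sink edge $u_i \to t$ crosses the cut precisely when $u_i \in S_{\mathbf{L}}$, i.e.\ $\ell_i = 1$, contributing $\phi(\vx_i,1)$; a source edge $s \to u_i$ crosses precisely when $u_i \in T_{\mathbf{L}}$, i.e.\ $\ell_i = 0$, contributing $\phi(\vx_i,0) + \frac{1}{2}\sum_{j \neq i}\psi(\vz_{i,j},0,0)$. Summing over all $i$, the $\phi$ pieces collapse to $\sum_i \phi(\vx_i,\ell_i)$, which already matches the profile part of $E(\mathbf{L})$, while the source edges leave behind a surplus $\frac{1}{2}\sum_{i:\,\ell_i=0}\sum_{j\neq i}\psi(\vz_{i,j},0,0)$ that I will reconcile against the link terms.

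Then I would treat the user--user edges per unordered pair $\{i,j\}$. The directed edge $u_i \to u_j$ is cut only in the configuration $(\ell_i,\ell_j)=(1,0)$, contributing $\psi(\vz_{i,j},1,0)-\frac{1}{2}\psi(\vz_{i,j},0,0)$, and symmetrically $u_j \to u_i$ is cut only when $(\ell_i,\ell_j)=(0,1)$, contributing $\psi(\vz_{i,j},0,1)-\frac{1}{2}\psi(\vz_{i,j},0,0)$; in the $(1,1)$ and $(0,0)$ configurations no user--user edge between $i$ and $j$ is cut. The key bookkeeping step is to redistribute the surplus from the previous paragraph back onto each pair: the half-term $\frac{1}{2}\psi(\vz_{i,j},0,0)$ carried by the source edge of node $i$ (resp.\ $j$) appears exactly when $\ell_i=0$ (resp.\ $\ell_j=0$). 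Checking the four label configurations of a fixed pair then shows the combined contribution equals $\psi(\vz_{i,j},\ell_i,\ell_j)$ in every case: in $(1,1)$ both sides are $0$ by the convention $\psi(\vz_{i,j},1,1)=0$; in $(0,0)$ the two source half-terms sum to the full $\psi(\vz_{i,j},0,0)$; and in each mixed case a single source half-term cancels the $-\frac{1}{2}\psi(\vz_{i,j},0,0)$ inside the one crossing user--user edge, leaving $\psi(\vz_{i,j},1,0)$ or $\psi(\vz_{i,j},0,1)$.

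Summing these matched per-pair identities together with the profile identity yields $\text{capacity}=E(\mathbf{L})$. I expect the main obstacle to be purely the careful double-counting in this reconciliation: the surplus $\frac{1}{2}\psi(\vz_{i,j},0,0)$ originates from the two endpoint source edges, not from the pair's own edges, so the argument hinges on attributing each endpoint's half-term to the correct pair and on treating $\psi(\vz_{i,j},0,0)$ as a symmetric quantity of the unordered pair. I would also remark that the assumptions of Section \ref{sec: assumptions} guarantee $\psi(\vz_{i,j},1,0)-\frac{1}{2}\psi(\vz_{i,j},0,0)\ge 0$ (and likewise for the reverse edge), so that every capacity is nonnegative and the cut is well defined; this nonnegativity is what the minimum-cut interpretation in Theorem \ref{thm: optimality} relies on, rather than the value equality asserted here.
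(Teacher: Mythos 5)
Your proposal is correct and follows essentially the same route as the paper's proof: enumerate which source, sink, and user--user edges cross the $\mathbf{L}$-configuration cut in each of the label configurations, then show that the $\frac{1}{2}\psi(\vz_{i,j},0,0)$ half-terms on the source edges recombine with the $-\frac{1}{2}\psi(\vz_{i,j},0,0)$ corrections on the crossing user--user edges to yield exactly $\sum_{i<j}\psi(\vz_{i,j},\ell_i,\ell_j)$; your per-pair redistribution is just a regrouping of the paper's single algebraic chain. Your closing observation that the assumptions of Section \ref{sec: assumptions} make every edge capacity nonnegative is a correct and worthwhile addition that the paper leaves implicit.
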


\begin{proof}{Proof of Lemma \ref{lem}.}
	Let $G=(\mathcal{V},\mathcal{E})$ be the energy graph representation of energy function $E$, and consider an arbitrary fixed location classification vector $\mathbf{L} \in \{0,1\}^{N}$.  From our definition of the $\mathbf{L}$-configuration cut, $\ell_{i}=1$ implies node $u_{i} \in S_{\mathbf{L}}$ and $(u_{i},t) \in C_{\mathbf{L}}$.  Likewise, $\ell_{i}=0$ implies $(S,u_{i}) \in C_{\mathbf{L}}$.  
	Now consider an arbitrary pair of user nodes $u_{i},u_{j}, \ i \neq j$.  One of the following cases apply: (Case 1) both users are assigned to location class $1$ ($\ell_{i}=\ell_{j}=1$), (Case 2) one of the users is in location class 1 and the other is assigned to location class 0 ($\ell_{i} \neq \ell_{j}$), or (Case 3) both users are assigned to location class 0 ($\ell_{i}=\ell_{j}=0$).  We consider the implications of each case on cut-set $C_{\mathbf{L}}$.
	
	{\bf Case 1.}
	Because $u_{i}$ and $u_{j}$ are both in set $S_{\mathbf{L}}$, edges between these two nodes are not in $C_{\mathbf{L}}$.  It follows that 
	\[
	\ell_{i}=\ell_{j}=1 \Rightarrow 
	(u_{i},u_{j}),(u_{j},u_{i}) \notin C_{\mathbf{L}}.
	\]
	However, edges $(u_{i},t)$ and $(u_{j},t)$ are in the cut-set $C_{\mathbf{L}}$.
	Figure \ref{fig: case1} provides an illustration of this case.  
	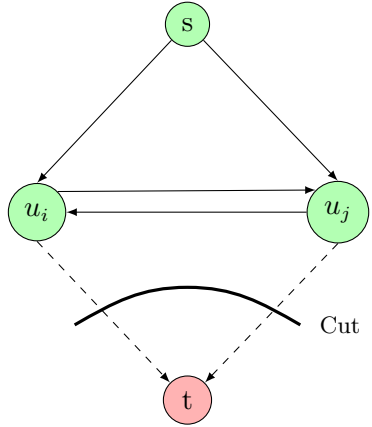
\begin{figure}
		\centering
		\begin{tikzpicture}
		\node[circle,draw,fill=green!30!white](u1) at (0,0){$u_{i}$};
		\node[circle,draw,fill=green!30!white](u2) at (4,0){$u_{j}$};
		\node[circle,draw,fill=red!30!white](t) at (2,-2.5){t};
		\node[circle,draw,fill=green!30!white](s) at (2,2.5){s};
		\path[draw,-latex](s.south west) -- (u1.north);
		\path[draw,-latex](s.south east) -- (u2.north);
		\path[draw,-latex](u1.north east) -- (u2.north west);
		\path[draw,-latex](u2.west) -- (u1.east);
		\path[draw,-latex,dashed](u1.south)--(t.north west);
		\path[draw,-latex,dashed](u2.south)--(t.north east);
		\draw[very thick] 
		(0.5,-1.5) 
		to[out=30,in=180] 
		(2,-1) 
		to[out=0,in=150] 
		(3.5,-1.5) node[above,right]{\footnotesize Cut};
		\end{tikzpicture}
		\caption{Illustration of Case 1 cut.  Nodes in set $S_{\mathbf{L}}$ are shaded green, while nodes in set $T_{\mathbf{L}}$ are shaded red.  Dashed edges are in cut-set $C_{\mathbf{L}}$.} \label{fig: case1}
	\end{figure}
	
	{\bf Case 2.}
	Without loss of generality, assume $\ell_{i}=1$ and $\ell_{j}=0$, implying $(u_{i},t) \in C_{\mathbf{L}}$ and $(s,u_{j}) \in C_{\mathbf{L}}$.  Because $u_{i} \in S$ and $u_{j}\in T$, edge $(u_{i},u_{j})$ is also in the cut-set $C_{\mathbf{L}}$.  This case is depicted in Figure \ref{fig: case2}.
	
	Note that the reverse edge from $u_{j}$ to $u_{i}$ in Figure \ref{fig: case2} is \emph{not} in the cut-set because they go from set $T_{\mathbf{L}}$ to set $S_{\mathbf{L}}$. 
	
	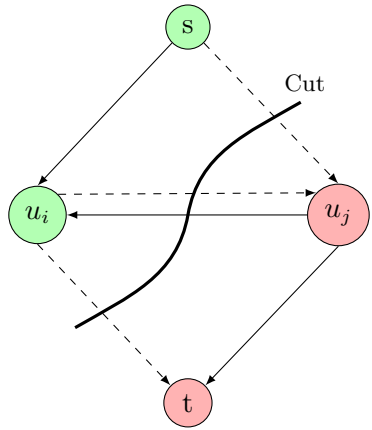
\begin{figure}[!hbt]
		\centering
		\begin{tikzpicture}
		\node[circle,draw,fill=green!30!white](u1) at (0,0){$u_{i}$};
		\node[circle,draw,fill=red!30!white](u2) at (4,0){$u_{j}$};
		\node[circle,draw,fill=red!30!white](t) at (2,-2.5){t};
		\node[circle,draw,fill=green!30!white](s) at (2,2.5){s};
		\path[draw,-latex](s.south west) -- (u1.north);
		\path[draw,-latex,dashed](s.south east) -- (u2.north);
		\path[draw,-latex,dashed](u1.north east) -- (u2.north west);
		\path[draw,-latex](u2.west) -- (u1.east);
		\path[draw,-latex,dashed](u1.south)--(t.north west);
		\path[draw,-latex](u2.south)--(t.north east);
		\draw[very thick] 
		(0.5,-1.5) 
		to[out=30,in=-100] 
		(2,0) node[left]{} 
		to[out=80,in=-150]
		(3.5,1.5) 
		node[above] {\footnotesize Cut};
		\end{tikzpicture}
		\caption{Illustrations of both minimum cut possibilities for Case 2.   Nodes in set $S_{\mathbf{L}}$ are shaded green, while nodes in set $T_{\mathbf{L}}$ are shaded red.  Dashed edges are in cut-set $C_{\mathbf{L}}$.} \label{fig: case2}
	\end{figure}
	
	{\bf Case 3.}  Finally, we consider the case in which $\ell_{i}=\ell_{j}=0$.  This case is very similar to Case 1: edges $(s,u_{i})$ and $(s,u_{j})$ are in $C_{\mathbf{L}}$, while other edges incident to these nodes are not in the cut-set (see Figure \ref{fig: case3}).
	

	\begin{figure}[!hbt]
		\centering
		\begin{tikzpicture}
		\node[circle,draw,fill=red!30!white](u1) at (0,0){$u_{i}$};
		\node[circle,draw,fill=red!30!white](u2) at (4,0){$u_{j}$};
		\node[circle,draw,fill=red!30!white](t) at (2,-2.5){t};
		\node[circle,draw,fill=green!30!white](s) at (2,2.5){s};
		\path[draw,-latex,dashed](s.south west) -- (u1.north);
		\path[draw,-latex,dashed](s.south east) -- (u2.north);
		\path[draw,-latex](u1.north east) -- (u2.north west);
		\path[draw,-latex](u2.west) -- (u1.east);
		\path[draw,-latex](u1.south)--(t.north west);
		\path[draw,-latex](u2.south)--(t.north east);
		\draw[very thick] 
		(0.5,1.5) 
		to[out=-30,in=180] 
		(2,1) node[above]{\footnotesize Cut} 
		to[out=0,in=-150] 
		(3.5,1.5);
		\end{tikzpicture}
		\caption{Illustration of Case 3 cut.  Nodes in set $S_{\mathbf{L}}$ are shaded green, while nodes in set $T_{\mathbf{L}}$ are shaded red.  Dashed edges are in cut-set $C_{\mathbf{L}}$.} \label{fig: case3}
	\end{figure}
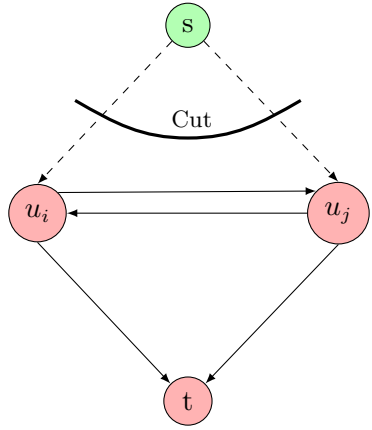

	From these rules we can identify all of the edges comprising set cut-set $C_{\mathbf{L}}$ for an arbitrary location class vector $\mathbf{L}$. The total capacity of the cut is the sum of all of these edge capacities:
	\begin{align*}
	\sum_{a \in C_{\mathbf{L}}} c_{a} & = \underbrace{\sum_{i: \; \ell_{i}=0} c_{(s,u_{i})} + \sum_{i: \; \ell_{i}=1} c_{(u_{i},t)}}_{\mathbf{L}\mathrm{-configuration}} 
	+ \underbrace{\sum_{i<j:\; \ell_{i}=1,\; \ell_{j}=0} c_{(u_{j},u_{i})}
		+ \sum_{i<j:\; \ell_{i}=0,\; \ell_{j}=1} c_{(u_{i},u_{j})}}_{\mathrm{Case \ 2 \ user \ node \ links}} \\
	& = \sum_{i: \; \ell_{i}=0} \left(\phi(\vx_{i},0) +\frac{1}{2}\sum_{j:\; \neq i} \psi(\vz_{i,j},0,0) \right)
	+\sum_{i:\;\ell_{i}=1} \phi(\vx_{i},1) 
	\\ & \quad 
	+ \sum_{i<j:\; \ell_{i}=1,\; \ell_{j}=0} \left(\psi(\vz_{i,j},1,0) - \frac{1}{2}\psi(\vz_{i,j},0,0)\right)
	+ \sum_{  i<j:\; \ell_{i}=0,\; \ell_{j}=1} \left(\psi(\vz_{i,j},0,1) - \frac{1}{2}\psi(\vz_{i,j},0,0)\right) \\
	& = \sum_{i=1}^{N} \phi(\vx_{i},\ell_{i})
	+ \sum_{i<j:\;\ell_{i}=\ell_{j}=0} \psi(\vz_{i,j},0,0)
	+\sum_{i<j:\; \ell_{i}=1,\; \ell_{j}=0} 
	\psi(\vz_{i,j},1,0) 
	+ \sum_{  i<j:\; \ell_{i}=0,\; \ell_{j}=1} 
	\psi(\vz_{i,j},0,1) 
	\\
	& = \sum_{i=1}^{N} \phi(\vx_{i},\ell_{i}) + \sum_{i < j} \psi(\vz_{i,j},\ell_{i},\ell_{j}) = E(\mathbf{L}). \halmos
	\end{align*}
\end{proof}

\begin{proof}{Proof of Theorem \ref{thm: optimality}.}
	Theorem \ref{thm: optimality} follows almost immediately from Lemma \ref{lem}.  Suppose we find the minimum $s$-$t$ cut on the Energy Graph corresponding to energy function $E(\mathbf{L})$, and let sets $S$ and $T$ be the corresponding partition of $\mathcal{V}$ and $C \subset \mathcal{E}$ be the cut-set.  In order for the cut to be valid, each user node $u_{i}$ must be in either set $S$ or set $T$.  For each user node $u_{i} \in S$, edge $(u_{i},t)$ is in the cut-set $C$ and we set $\ell_{i}=1$.  For each user node $u_{i} \in T$, edge $(s,u_{i}) \in C$ and we set $\ell_{i}=0$.  Let $\mathbf{L}^{\star}$ be the resulting vector of location assignments.  
	
	From Lemma \ref{lem}, the capacity of this cut is $E(\mathbf{L}^{\star})$.  Because the sets $S$ and $T$ were constructed from the minimum capacity $s$-$t$ cut on the graph, there cannot be another location assignment $\mathbf{L}$ for which $E(\mathbf{L}) < E(\mathbf{L}^{\star})$, as such a vector would allow for the construction of an $s$-$t$ cut with lower capacity.  \halmos
\end{proof}

\section{World Cities Data} \label{appendix: world_cities}

The following cities were used in set $W_{2}$ in our specific location collection implementations.  This list is extracted from the World Cities Dataset created and maintained by MaxMind, available at 
http://www.maxmind.com/ \citep{maxmind}\\[12pt]

\begin{center} \fontsize{10pt}{10pt}\selectfont
\begin{tabular}{llllll}
dubai & kabul & yerevan & luanda & cordoba & rosario \\ 
vienna & adelaide & brisbane & melbourne & perth & sydney \\ 
baku & dhaka & khulna & brussels & ouagadougou & sofia \\ 
belem & belo horizonte & brasilia & campinas & curitiba & fortaleza \\ 
goiania & guarulhos & manaus & nova iguacu & porto alegre & recife \\ 
rio de janeiro & salvador & sao paulo & minsk & montreal & toronto \\ 
vancouver & kinshasa & lubumbashi & brazzaville & abidjan & santiago \\ 
douala & yaounde & anshan & changchun & chengdu & chongqing \\ 
dalian & datong & fushun & fuzhou & guangzhou & guiyang \\ 
handan & hangzhou & harbin & hefei & huainan & jilin \\ 
jinan & kunming & lanzhou & luoyang & nanchang & nanjing \\ 
peking & qingdao & rongcheng & shanghai & shenyang & shenzhen \\ 
suzhou & taiyuan & tangshan & tianjin & urumqi & wuhan \\ 
wuxi & xian & xianyang & xinyang & xuzhou & barranquilla \\ 
bogota & cali & medellin & prague & berlin & hamburg \\ 
munich & copenhagen & santo domingo & algiers & guayaquil & quito \\ 
alexandria & cairo & gizeh & barcelona & madrid & addis abeba \\ 
paris & london & tbilisi & accra & kumasi & conakry \\ 
port-au-prince & budapest & bandung & bekasi & depok & jakarta \\ 
makasar & medan & palembang & semarang & surabaya & tangerang \\ 
dublin & agra & ahmadabad & allahabad & amritsar & aurangabad \\ 
bangalore & bhopal & bombay & calcutta & delhi & faridabad \\ 
ghaziabad & haora & hyderabad & indore & jabalpur & jaipur \\ 
kalyan & kanpur & lakhnau & ludhiana & madras & nagpur \\ 
new delhi & patna & pimpri & pune & rajkot & surat \\ 
thana & vadodara & varanasi & visakhapatnam & baghdad & esfahan \\ 
karaj & mashhad & qom & shiraz & tabriz & milan \\ 
rome & hiroshima & kawasaki & kobe & nagoya & saitama \\ 
tokyo & nairobi & phnum penh & seoul & almaty & bayrut \\ 
beirut & tripoli & casablanca & fez & rabat & antananarivo \\ 
bamako & mandalay & rangoon & ecatepec & guadalajara & juarez \\ 
leon & mexico & monterrey & nezahualcoyotl & puebla & tijuana \\ 
kuala lumpur & maputo & benin & ibadan & kaduna & kano \\ 
lagos & maiduguri & port harcourt & managua & lima & davao \\ 
manila & faisalabad & gujranwala & hyderabad & karachi & lahore \\ 
multan & peshawar & rawalpindi & warsaw & bucharest & belgrade \\ 
chelyabinsk & kazan & moscow & nizhniy novgorod & novosibirsk & omsk \\ 
rostov-na-donu & saint petersburg & samara & ufa & volgograd & yekaterinburg \\ 
jiddah & mecca & riyadh & khartoum & umm durman & stockholm \\ 
singapore & freetown & dakar & mogadishu & aleppo & damascus \\ 
bangkok & adana & ankara & bursa & gaziantep & istanbul \\ 
izmir & kaohsiung & kaohsiung & taichung & taipei & dar es salaam \\ 
kiev & odesa & kampala & phoenix & los angeles & san diego \\ 
chicago & new york & philadelphia & dallas & houston & san antonio \\ 
montevideo & tashkent & caracas & maracaibo & valencia & hanoi \\ 
ha noi & ho chi minh city & cape town & durban & johannesburg & pretoria \\ 
soweto & lusaka & harare &  &  &  \\ 
\end{tabular}
\end{center}
\end{document}